\newcommand{\dis}{\mathbbmss{d}}
\newcommand{\vde}{W}
\newcommand{\rohk}{\ro}
\title[HK theorems]{Hohenberg-Kohn theorems \\ for interactions, spin and temperature}
\author[L. Garrigue]{Louis Garrigue}
\address{CEREMADE, Universit\'e Paris-Dauphine, PSL Research University, F-75016 Paris, France} 
\email{garrigue@ceremade.dauphine.fr}
\begin{document} 

%
%

\begin{abstract} 
We prove Hohenberg-Kohn theorems for several models of quantum mechanics. First, we show that the pair correlation function of any ground state contains the information of the interactions and of the external potentials. Then, in the presence of the Zeeman interaction, a strong constraint on external fields is derived for systems having the same ground state densities and magnetizations. Moreover, we provide a counterexample in a setting involving non-local potentials. Next, we prove that the density and the entropy of a ground state contain the information of both the imposed external potential and temperature. Eventually, we conclude that at positive temperature, Hohenberg-Kohn theorems generically hold.
\end{abstract}
\date{\today}

\maketitle
\setcounter{tocdepth}{1} 

\section{Introduction} 

The ground state of a many-body quantum system at equilibrium is a quantity involving a large number of variables, and one cannot measure it directly. Therefore, it is natural to ask which simple and measurable quantity is sufficient to know to get all the information of a system. Then one can work only with this relevant reduced information as in Density Functional Theory, which is one of the most successful methods in quantum physics and chemistry to simulate matter at the microscopic scale \cite{HohKoh64,KohSha65,Burke12,Jones15}. A famous result of Hohenberg and Kohn \cite{HohKoh64} from 1964 shows that, at equilibrium, the ground state density of a system of quantum electrons contains all the information of the external electric potential. This implies that any physical quantity is a functional of this density and justifies Density Functional Theory. Then, many articles were devoted to extend this initial work to other configurations, because those conceptual results provide fundamental insights on the structure of quantum mechanical models. The main goal is to establish a bijective matching between an external imposed field, and a ground state internal reduced density.

In chronological order but not exhaustively, Mermin extended the theorem to fixed positive temperature \cite{Mermin65} (Thermal DFT), Barth and Hedin looked at Hamiltonians having a Zeeman term \cite{BarHed72} (Spin DFT), Gilbert considered non-local potentials \cite{Gilbert75} (Matrix DFT), Vignale and Rasolt treated Pauli Hamiltonians without the Zeeman term \cite{VigRas87} (Current DFT), and Siedentop, M\"uller and Ziesche considered pair potentials \cite{SieMul81,Ziesche94} (Pair DFT). In those works, the authors conjectured corresponding Hohenberg-Kohn theorems, that is some reduced ground state densities contain the information of external potentials applied to the systems. Nevertheless, the provided proofs turn out to be incomplete and first counterexamples were found by Capelle and Vignale in models dealing with magnetic fields \cite{CapVig01,CapVig02} (Spin DFT, Spin-Current DFT and Superconducting DFT at zero temperature), invalidating such general Hohenberg-Kohn properties. In two recent articles \cite{Garrigue18,Garrigue19}, we proved a unique continuation property necessary in the final step of the rigorous proof of the original Hohenberg-Kohn theorem. But our work does not cover much of the systems mentionned above.

In this article we further analyze Hohenberg-Kohn theorems for interactions, spin, non-local potentials and temperature, and complete the proofs present in the litterature when possible. First, we comment on the original Hohenberg-Kohn property. In Theorem \ref{types}, we show a Hohenberg-Kohn theorem for interactions, indicating that pair correlations of any ground state are sufficient to deduce the interactions between particles, in general settings containing several types of particles. Next in Theorem \ref{spinhk}, we prove a partial Hohenberg-Kohn type result for Spin DFT, that is, a strong constraint on external fields when ground state one-body densities are equal, and we provide a counterexample to the Hohenberg-Kohn theorem in Matrix DFT. Then, we give a rigorous proof of the Hohenberg-Kohn theorem in Thermal DFT in Theorem \ref{hkt}, extending the existing statements by showing that the ground state entropy and the one-body density contain the information of both the temperature and the external electric potential. Eventually, we show that at positive temperatures, Hohenberg-Kohn theorems \apo{generically} hold, in particular in the classical case.

\subsection*{Acknowledgement}
I warmly thank Mathieu Lewin, my PhD advisor, for having supervized me during this work. This project has received funding from the European Research Council (ERC) under the European Union's Horizon 2020 research and innovation programme (grant agreement MDFT No 725528), and from the Allocation Moniteur Normalien.

\section{Standard setting and interactions}

The original Hohenberg-Kohn theorem initiated many works, like \cite{HohKoh64,Lieb83b,Mezey99,PinBokLud07,Garrigue19} for instance. In this section we present weaker assumptions under which it holds. Most of them can be applied to other Hohenberg-Kohn theorems presented in sections below. We denote by $d$ the dimension of the one-body space $\R^d$, and by $N$ the number of particles. 
Let us consider the interacting Hamiltonian
\begin{align}\label{opcl}
H^N(v) \df \sum_{i=1}^N -\Delta_i + \sum_{i=1}^N v(x_i) + \sum_{1 \le i \sle j \le N} w(x_i-x_j),
\end{align}
where $v$ and $w$ are respectively the external and interacting potentials. Let $q \in \N$ be the spin number. The one-body density of a wavefunction $\p \in L^2(\R^{dN},\C^{q^N})$, simply called the density hereafter, is defined by
\begin{multline*}
\ro_{\p}(x)  \df \sum_{s \in \acs{1,\dots,q}^N} \sum_{i=1}^N \int_{\R^{d(N-1)}} \ab{\p^s}^2(x_1,\dots,x_{i-1},x,x_{i+1},\dots,x_N) \\
\times \d x_1 \cdots \d x_{i-1} \d x_{i+1} \cdots \d x_N.
\end{multline*}
The set of square-integrable $N$-particle antisymmetric wavefunctions will be denoted by $L^2_{\tx{a}}(\R^{dN}) \df \bigwedge_{i=1}^N L^2(\R^d)$. 
In the paper we will write $E_i$, $i \in \acs{1,2}$ for ground state energies of the corresponding models we are treating. For instance here, $E_i \df \ps{\p_i, H(v_i) \p_i}$.

\subsection{Standard Hohenberg-Kohn}

The usual assumption in the Hohenberg-Kohn theorem is the equality of ground state densities, $\ro_{\p_1} = \ro_{\p_2}$ almost everywhere, where $\p_1$ and $\p_2$ are ground states produced by two electric potentials $v_1$ and $v_2$. In the case of Coulomb systems of molecules, this can be replaced by $\ro_{\p_1} = \ro_{\p_2}$ in a ball, by real analyticity \cite{Mezey99,FouHofOst02a,FouHofOst02c,Jecko10}. Nevertheless, we remark that in the general case, the proof only requires the following constraint \eqref{assump}, which we present together with the proof of the Hohenberg-Kohn theorem \cite{HohKoh64,Lieb63b,PinBokLud07,Garrigue18,Garrigue19}, for completeness. We will denote by $f_+ = \max(f,0)$ and $f_- = \max(-f,0)$ the positive and negative parts of a function $f = f_+-f_-$. Let $\Omega$ be a connected open set, we work on $\Omega^N$ with any boundary condition for the Laplacian. We call $\cQ\bpa{\sum_{i=1}^N -\Delta_i}$ the corresponding form domain. For $v \in L^{d/2}\loc(\R^d)$ such that $v_- \in (L^{d/2} + L^{\ii})(\R^d)$, we also denote by $H^N(v)$ the Friedrich extension of the operator in \eqref{opcl}, whose form domain is
 \begin{align*}
\cQ \bpa{ H^N(v) } & = \acs{ \p \in \cQ\pa{\smallsum_{i=1}^N -\Delta_i}  \bigst \mediumint_{\R^d} v_+ \ro_{\p} \sle + \ii  }.
 \end{align*}

\begin{theorem}[Hohenberg-Kohn theorem]\label{clHK}
	Let $\Omega \subset \R^d$ be an open and connected set, and choose any boundary condition. Let $p > \max(2d/3,2)$, let $v_1,v_2 \in L^p\loc(\Omega)$ having $(v_1)_-,(v_2)_-\in (L^p+L^{\ii})(\Omega)$, and $w \in (L^p+L^{\ii})(\R^d)$ such that $H^N(v_1)$ and $H^N(v_2)$ have at least one ground state each, we respectively denote by $\p_1$ and $\p_2$ one of the ground states of $H^N(v_1)$ and $H^N(v_2)$, and we assume that $\int (v_1)_+ \ro_{\p_2}$ and $\int (v_2)_+ \ro_{\p_1}$ are finite. If 
\begin{align}\label{assump}
\int_{\Omega}(v_1-v_2) (\ro_{\p_1}-\ro_{\p_2}) = 0,
\end{align}
then $v_1 = v_2 + (E_1-E_2)/N$.
\end{theorem}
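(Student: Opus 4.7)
The plan is to follow the classical Hohenberg-Kohn argument, using the hypothesis \eqref{assump} as a substitute for equality of densities, and invoking the strong unique continuation property from the author's prior work \cite{Garrigue18,Garrigue19} at the final step.

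First I would use the variational principle. Since $\p_1$ is a ground state of $H^N(v_1)$ with energy $E_1$, and $\p_2$ is an admissible trial state (because $\int (v_1)_+ \ro_{\p_2} < \infty$ ensures $\p_2 \in \cQ(H^N(v_1))$), we get
\begin{equation*}
E_1 \le \ps{\p_2, H^N(v_1)\p_2} = E_2 + \int_\Omega (v_1-v_2)\ro_{\p_2},
\end{equation*}
and symmetrically $E_2 \le E_1 + \int_\Omega (v_2 - v_1)\ro_{\p_1}$. Summing the two inequalities yields $0 \le \int_\Omega (v_1-v_2)(\ro_{\p_2}-\ro_{\p_1})$, so assumption \eqref{assump} forces equality in both lines. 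Hence $\p_2$ is actually a ground state of $H^N(v_1)$ as well, satisfying $H^N(v_1)\p_2 = E_1 \p_2$ in addition to $H^N(v_2)\p_2 = E_2 \p_2$.

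Next I would subtract the two eigenvalue equations. The kinetic and interaction parts cancel, leaving the pointwise identity
\begin{equation*}
\bpa{\smallsum_{i=1}^N (v_1-v_2)(x_i)}\p_2(x_1,\dots,x_N) = (E_1-E_2)\,\p_2(x_1,\dots,x_N)
\end{equation*}
in $L^2(\Omega^N)$. The critical step is then to remove the factor $\p_2$. This is precisely where the regularity hypothesis $p > \max(2d/3,2)$ enters: it is exactly the threshold ensuring that a strong unique continuation theorem applies to the Schrödinger operator $H^N(v_2)$, so that its ground state $\p_2$ cannot vanish on a set of positive Lebesgue measure in $\Omega^N$. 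Invoking this result gives $\sum_{i=1}^N (v_1-v_2)(x_i) = E_1-E_2$ almost everywhere on $\Omega^N$.

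Finally I would conclude by a separation-of-variables argument: freezing $x_2,\dots,x_N$ at generic values shows that $x_1 \mapsto (v_1-v_2)(x_1)$ is almost everywhere constant, and varying one coordinate at a time fixes the common constant to $(E_1-E_2)/N$. The main obstacle is step three, the unique continuation principle for $N$-body Schrödinger operators with singular potentials and interactions in $L^p+L^\ii$; under the stated regularity this is furnished by \cite{Garrigue18,Garrigue19}, while the rest of the argument is the standard Hohenberg-Kohn manipulation adapted to the weaker hypothesis \eqref{assump}.
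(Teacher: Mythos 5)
Your proposal is correct and follows essentially the same route as the paper's own proof: variational inequalities in both directions, the hypothesis \eqref{assump} forcing equality so that $\p_2$ is also a ground state of $H^N(v_1)$, subtraction of the two Schr\"odinger equations, unique continuation from \cite{Garrigue18,Garrigue19} to discard the factor $\p_2$, and a final averaging over $N-1$ coordinates to extract the constant. The only cosmetic difference is that the paper integrates over a bounded set $A^{N-1}$ rather than freezing coordinates at generic values, which amounts to the same Fubini argument.
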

\begin{proof}
From the conditions imposed on the potentials, $\p_2 \in \cQ\pa{H^N(v_1)}$ and $\p_1 \in \cQ\pa{H^N(v_2)}$. Using the definition of the ground state energy, we have
 \begin{align*}
E_1 \leq \ps{\p_2,H^N(v_1) \p_2} = E_2 + \int (v_1 - v_2) \ro_{\p_2},
 \end{align*}
which can be written
\begin{align}\label{e1}
E_1 - E_2 & \leq \int (v_1- v_2) \ro_{\p_2}.
\end{align}
Exchanging labels $1$ and $2$, we also have
\begin{align}\label{e2}
E_2 - E_1 & \leq \int (v_2- v_1) \ro_{\p_1}.
\end{align}
Now using the hypothesis \eqref{assump}, we get
\begin{align*}
E_1 - E_2 = \int (v_1- v_2) \ro_{\p_1} = \int (v_1-v_2) \ro_{\p_2},
\end{align*}
and thus $\ps{\p_2,H^N(v_1) \p_2} = E_1$. Consequently, $\p_2$ verifies Schr\"odinger's equation for $H^N(v_1)$, that is $H^N(v_1) \p_2 = E_1 \p_2$. Taking the difference with Schr\"odinger's equation verified by $\p_2$ for $H^N(v_2)$, we have
\begin{align*}
	\pa{E_2 - E_1 + \sum_{i=1}^N (v_1-v_2)(x_i)} \p_2 = 0.
\end{align*}
Since Schr\"odinger's equation is verified by $\p_2$, it does not vanish on sets of positive measure by unique continuation \cite{Garrigue18,Garrigue19}, which is a local property, and we get
\begin{align}\label{eqhk}
 E_2 - E_1 + \sum_{i=1}^N (v_1-v_2)(x_i) = 0,
\end{align}
a.e. in $\Omega^N$. Integrating this equation on $(x_i)_{2\le i \le N} \in A^{N-1}$ where $A \subset \Omega$ is bounded, we obtain that there is a constant $c$ such that $v_1 = v_2 + c$ almost everywhere on $\Omega$. Eventually, $c=(E_1-E_2)/N$ by using \eqref{eqhk} once again.
\end{proof}

This proof relies on a unique continuation result \cite{Garrigue19}, itself based on a Carleman inequality, and its use yields the condition $p > \max(2d/3,2)$. In the quantum case, the Laplace operator forces minimizers to be spread in the whole space by unique continuation, which is linked to the Heisenberg principle. Note that, on the contrary, in the classical case minimizers of the energy at zero temperature have a very small support (they concentrate on the minimizers of the energy), and therefore they provide almost no information on
the potential.

The hypothesis \eqref{assump} is a much weaker assumption than $\ro_{\p_1} = \ro_{\p_2}$. Indeed it replaces an infinite set of equations by only one. Moreover, it is global, and it better exhibits the duality between electric potentials and ground state densities. This reduces to $\ro_{\p_1} = \ro_{\p_2}$ when one wants it to be independent of external potentials.

\subsection{Remark on the original proof}\label{rk}

The original proof of Hohenberg and Kohn is by contradiction, it assumes the non-degeneracy of the systems and uses strict inequalities in \eqref{e1} and \eqref{e2}. The authors precised that two different potentials could not lead to the same ground state, and Lieb \cite{Lieb83b} remarked that this fact relies on a unique continuation property, later proved in \cite{Garrigue19} for Coulomb systems. The proof using non-strict inequalities, first appearing in \cite{PinBokLud07}, is more general because it avoids to assume non-degeneracy of systems. Moreover, it is direct (proofs by contradiction are avoided when possible) and does not involve any additional argument compared to the original proof.

In the literature, there are many articles presenting Hohenberg-Kohn results but in which authors do not show that two different operators cannot produce the same ground states. This leads to incomplete proofs at best, but to false statements at worse. For instance the Hohenberg-Kohn theorems for interactions \cite{SieMul81}, and at positive temperature in the canonical case, hold by using Lemmas \ref{zeros} and \ref{lemfree2}, but in the case of non-local potentials, a counterexample can be found as presented in section \ref{counter}. The proof of a Hohenberg-Kohn result always follows the same scheme, and the only extra properties to show concern this step. They usually consist in a lemma, as Lemmas \ref{zeros}, \ref{zero2}, \ref{lemfree2}, \ref{lemfree}, see also the proof of Theorem \ref{spinhk}. Providing those results for some of the main Hohenberg-Kohn statements is one of the goals of this work.

\subsection{A semi-metric on the space of binding potentials}

In the following, we take $\Omega = \R^d$ for simplicity, and we show that the quantity involved in \eqref{assump} has special properties. We define the equivalence relation $\sim$ on the space $L^1\loc$ of functions and write $v \sim u$ if there is some constant $c$ such that $v = u + c$. We define the space of potentials
\begin{multline*}
\cV^N \df \Bigl\{ v \in L^p\loc(\R^d) \bigst v_- \in (L^p+L^{\ii})(\R^d), \\
H^N(v) \tx{ has a non degenerate fermionic ground state} \Bigr\} \big/\sim, 
\end{multline*}
where we identified potentials modulo constants, and where $p > \max(2d/3,2)$. We define the map
\begin{equation*}\rohk : 
\begin{array}{rcl}
\cV^N & \lra & (L^1 \cap L^q)(\R^d) \\
v & \longmapsto & \rohk(v),
\end{array}
\end{equation*}
which associates the unique ground state density to any potential $v \in \cV^N$, and where $q \df d/(d-2)$ if $d \ge 3$, $q$ can take any value in $[1,+\ii[$ if $d= 2$, and $q = +\ii$ if $d=1$.
The Hohenberg-Kohn theorem implies the injectivity 
of $\rohk$, therefore it is bijective on its image. This is the reason why it is commonly said that, at equilibrium, the ground state density contains the information of the external classical electric field. Now we define the function
\begin{align*}
\dis(v,u) \df - \int_{\R^d} (v-u) \bpa{\rohk(v)-\rohk(u)}
\end{align*}
if $\int u_+ \rohk(v)$ and $\int v_+ \rohk(u)$ are finite, and $\dis(v,u) \df + \ii$ otherwise. We remark that $\dis(v,u) \ge 0$ for any $v,u \in \cV^N$, using the inequalities \eqref{e1} and \eqref{e2} presented in the proof of the Hohenberg-Kohn theorem, and we remark that $\dis(v,u) = 0$ if and only if $v \sim u$. Thus $\dis$ is a semi-metric. For example, $\dis(v,u)$ is finite when applied to $v(x) = x^2$ and $u(x) = -Z_N \ab{x}\iv$ (for $Z_N$ large enough so that $u \in \cV^N$), because of the exponential decay of ground states when the potentials grow polynomially and in presence of a gap between the ground energy and the essential spectrum \cite{Simon82,Agmon,Hislop00}. Those are examples of standard physical potentials, but their $L^{p}+L^{\ii}$ distance is infinite. Therefore $\dis$ seems to be a natural \apo{physical distance}, because it enables to compare very different, but still physical, potentials.

In Theorem \ref{clHK} and in the definition of $\dis$, we can avoid the assumptions $\int (v_1)_+ \ro_{\p_2}$ and $\int (v_2)_+ \ro_{\p_1}$ to be finite, replacing them by the assumptions that $(v_1)_+$ and $(v_2)_+$ are at most polynomially increasing when $\ab{x} \ra +\ii$. Then by the exponential decay of ground states, $\int (v_1)_+ \ro_{\p_2}$ and $\int (v_2)_+ \ro_{\p_1}$ are automatically finite if $E_i \sle \min \sigma\ind{ess}\pa{H^N(v_i)}$.

\subsection{Hohenberg-Kohn for interactions} 

We consider $N$ particles in $\er{d}$, submitted to a two-body multiplication operator $\vde$, accounting both for interactions and external potentials. The corresponding $N$-body Hamiltonian takes the form 
\begin{equation}\label{ham}
H^N(\vde) = \sum_{i=1}^N -\Delta_{i} +  \sum_{1 \leq i \sle j \leq N} \vde(x_i,x_j),
\end{equation}
acting on $L^2(\er{dN})$. We will also use
\begin{equation}\label{ham}
H^N(v,w) = \sum_{i=1}^N -\Delta_{i} + \sum_{i=1}^{N} v(x_i) + \sum_{1 \leq i \sle j \leq N} w(x_i-x_j),
\end{equation}
on $L^2(\er{dN})$, which corresponds to $\vde(x,y) = \f{1}{N-1} \pa{v(x) + v(y)} + w(x-y)$. The two-body reduced density of a state $\p$, or \emph{pair density}, is defined by
\begin{align*}
\rod_{\p}(x,y) \df \f{N(N-1)}{2} \int_{\er{d(N-2)}} \abs{\p}^2(x,y,x_3,\dots,x_N) \d x_3 \cdots \d x_N.
\end{align*}
We can deduce the density from $\rod_{\p}$ by taking its marginal
\begin{align*}
\ro_{\p}(x) = \f{2}{N-1} \int_{\er{d}} \rod_{\p}(x,y) \d y.
\end{align*}
The energy of a state is coupled to the two-body potential $W$ only via $\rod_{\p}$, because 
\begin{align*}
	\biggps{\p, \sum_{1 \leq i \sle j \leq N} \vde(x_i,x_j) \p} = \int_{\R^{2d}} \vde\rod_{\p} .
\end{align*}

The next Hohenberg-Kohn theorem establishes a duality between $\vde$ and $\ro\de$, and between $(v,w)$ and $\ro\de$. A similar statement is present in \cite{SieMul81}, but with an incomplete proof, which needs Lemma \ref{zeros} $ii)$ (see the remark in section \ref{rk}). The knowledge of $\ro\de$ is sufficient to determine $\vde$ alone or the pair $(v,w)$, hence the ground state pair correlations $\ro\de$ contain all the information of the system.
\begin{theorem}[Hohenberg-Kohn for interactions]\label{hkdeux}\tx{ }

$i$) Let $q > \max(4d/3,2)$, and let $\vde_1, \vde_2 \in (L^q+L^{\ii})(\R^{2d})$ be even two-body potentials such that $H^N(\vde_1)$ and $H^N(\vde_2)$ have ground states $\p_1$ and $\p_2$. If
\begin{align*}
	\int_{\er{2d}}\bpa{\vde_1-\vde_2} \bpa{\ro_{\p_1}\de-\ro_{\p_2}\de} =0,
\end{align*}
then $$\vde_1 = \vde_2 +\f{2(E_1-E_2)}{N(N-1)}.$$

$ii$) Let $p > \max(2d/3,2)$ and let the potentials $v_1,v_2,w_1,w_2 \in (L^{p}+L^{\ii})(\er{d})$, $w_1,w_2$ even, be such that $H^N(v_1,w_1)$ and $H^N(v_2,w_2)$ have ground states $\p_1$ and $\p_2$. If
\begin{multline*}
	\int_{\R^d} (v_1-v_2)\pa{\ro_{\p_1}-\ro_{\p_2}} + \int_{\er{2d}}\pa{w_1-w_2}(x-y) \bpa{\ro_{\p_1}\de-\ro_{\p_2}\de}(x,y) \d x \d y  =0,
\end{multline*}
then there exists a constant $c \in \R$ such that
\begin{equation*}
\left\{
\begin{aligned}
& w_1 = w_2 + c, \\
& v_1 = v_2 + \f{E_1-E_2}{N} - \f{c(N-1)}{2}.
\end{aligned}
\right.
\end{equation*}
\end{theorem}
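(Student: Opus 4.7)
The plan is to adapt the scheme of Theorem~\ref{clHK} to the two-body setting, with the last integration step replaced by a structural lemma about symmetric multi-body functions.

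For part $i)$, the integrability hypothesis together with $q > \max(4d/3,2)$ guarantees that $\p_1 \in \mathcal{Q}(H^N(\vde_2))$ and $\p_2 \in \mathcal{Q}(H^N(\vde_1))$. The variational principle then gives, exactly as in \eqref{e1}--\eqref{e2},
\begin{align*}
E_1 - E_2 \le \int_{\R^{2d}} (\vde_1 - \vde_2)\, \rho^{(2)}_{\p_2}, \qquad E_2 - E_1 \le \int_{\R^{2d}} (\vde_2 - \vde_1)\, \rho^{(2)}_{\p_1}.
\end{align*}
The hypothesis forces both to be equalities, so $\p_2$ minimizes the energy for $H^N(\vde_1)$ as well and therefore satisfies $H^N(\vde_1)\p_2 = E_1\p_2$. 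Subtracting this from the Schr\"odinger equation satisfied by $\p_2$ for $H^N(\vde_2)$ and invoking the unique continuation property of \cite{Garrigue19} (the condition on $q$ being the natural transcription of the one-body integrability threshold to potentials on $\R^{2d}$) yields the pointwise identity
\begin{align*}
\sum_{1 \le i < j \le N} (\vde_1 - \vde_2)(x_i, x_j) \;=\; E_1 - E_2 \qquad \text{a.e.\ in } \R^{dN}.
\end{align*}
The remaining step, which is the essential new content beyond the one-body argument, is to deduce from this identity that the even symmetric two-body function $\vde_1 - \vde_2$ is itself constant; this is precisely the content of Lemma~\ref{zeros}\,$ii)$, highlighted in Section~\ref{rk} as the ingredient missing from \cite{SieMul81}. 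Plugging the resulting constant back into the display fixes its value to be $2(E_1-E_2)/(N(N-1))$.

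For part $ii)$, the same chain of arguments produces
\begin{align*}
\sum_{i=1}^N (v_1 - v_2)(x_i) + \sum_{1 \le i < j \le N} (w_1 - w_2)(x_i - x_j) \;=\; E_1 - E_2 \qquad \text{a.e.}
\end{align*}
To disentangle the one- and two-body contributions I plan to exploit the translation invariance of the $w$-sum: replacing $(x_1,\ldots,x_N)$ by $(x_1+t,\ldots,x_N+t)$ leaves the second sum unchanged, so $\sum_i (v_1-v_2)(x_i+t)$ is independent of $t$. Applying the one-body structural lemma (Lemma~\ref{zeros}\,$i)$) in the variable $t$ then forces $v_1 - v_2$ to be a constant. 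Once this is known, the residual two-body sum $\sum_{i<j}(w_1-w_2)(x_i-x_j)$ is itself constant, and Lemma~\ref{zeros}\,$ii)$ applied to the even symmetric function $(x,y)\mapsto(w_1-w_2)(x-y)$ delivers $w_1 - w_2 \equiv c$. The precise value of the additive constant for $v_1-v_2$ is then fixed by substituting back and matching coefficients, yielding the stated formulas.

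The main obstacle, as stressed in Section~\ref{rk}, is precisely this disentangling step: a priori two distinct $\vde$'s (or two distinct pairs $(v,w)$) could share a ground state, and ruling this out is exactly the content of Lemma~\ref{zeros}. All other ingredients---variational inequalities, the equality case producing a second Schr\"odinger equation, and unique continuation---are direct transcriptions of the proof of Theorem~\ref{clHK}.
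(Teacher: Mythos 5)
Your proposal is correct in substance and, for part $i)$ and the overall variational/unique-continuation scheme, coincides with the paper's proof. Where you genuinely diverge is the disentangling step in part $ii)$. The paper feeds the whole identity into Lemma \ref{zeros}\,$ii)$, whose proof first packages everything into the single even two-body function $\vde(x,y)=(N-1)^{-1}(v(x)+v(y))+w(x-y)$, reduces to $v(x)+v(y)+(N-1)w(x-y)=0$ via Lemma \ref{zeros}\,$i)$, and then uses the evenness of $w$ to get $v(x+y)=v(y-x)$, hence $v$ constant. You instead exploit the invariance of the $w$-sum under the simultaneous shift $x_i\mapsto x_i+t$ to isolate $v_1-v_2$ first, and only then treat the residual pair sum. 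Both routes work and use the evenness of $w$ at some point (yours needs it when concluding that $\omega(x-y)$ is an even function of $(x,y)$ so that Lemma \ref{zeros}\,$i)$ applies); your shift argument is arguably more transparent, while the paper's has the advantage that the conclusion is packaged once and for all as Lemma \ref{zeros}\,$ii)$, which is reused elsewhere (e.g.\ in Lemma \ref{lemfree2}).

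Two points you should tighten. First, your lemma references are off: the statement you need in part $i)$ --- that an even two-body function with constant pair sum is constant --- is Lemma \ref{zeros}\,$i)$ (applied to $\vde_1-\vde_2$ minus the appropriate constant), not $ii)$; and the ``one-body structural lemma'' you invoke for the translation step does not exist in the paper --- Lemma \ref{zeros}\,$i)$ is a two-body statement and does not apply ``in the variable $t$''. Second, the deduction from ``$\sum_i u(x_i+t)$ is independent of $t$ a.e.'' to ``$u$ is constant'' is not immediate (you cannot simply set all $x_i$ equal on a null set); it requires the same mollification device used in the proof of Lemma \ref{zeros}: convolve in each variable and in $t$ against approximate identities, conclude for the regularized function, and pass to the limit. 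With that spelled out, your argument closes.
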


We stated the theorem in the whole space $\R^d$ for the sake of simplicity, but as Theorem \ref{clHK}, it holds for any open connected domain $\Omega \subset \R^d$, with any boundary condition. More precisely, knowing $\ro\de$ on $\Omega^2$ enables to know $v$ on $\Omega$ and $w$ on $\acs{x-y \st x,y \in \Omega}$. 

Pair DFT (or 2RDFT for two-body reduced density functional theory) was founded in \cite{SieMul81,Ziesche94,Ziesche96}, and further explored in \cite{GonSchVan96,LevZie01,Nagy02,Nagy03,NagAmo04,Furche04,AyeLev05,HigHig07,AyeNag07,AyeFue09,CheFri15} among other works. In particular, Mazziotti \cite{Mazziotti02,Mazziotti05,Mazziotti06,Mazziotti07,Mazziotti16} studied it extensively. This framework gives to the ground state two-body reduced density the central role. Theorem \ref{hkdeux} shows that this theory is well-posed.

\begin{proof}\tx{ }

$i)$ By the standard proof of the Hohenberg-Kohn theorem recalled above, we have $\bigps{\p_2,H^N(\vde_1) \p_2} = E_1$, so $\p_2$ is a ground state for $H^N(\vde_1)$, and $H^N(\vde_1) \p_2 = E_1 \p_2$. Taking the difference with ${H^N(\vde_2) \p_2 = E_2 \p_2}$ yields
\begin{equation*}
\pa{E_1-E_2+ \sum_{1 \leq i \sle j \leq N}^N \bpa{\vde_1-\vde_2}(x_i,x_j)}\p_2 = 0.
\end{equation*}
We use the unique continuation result \cite[Theorem 1.1]{Garrigue18}, in which we need an assumption of the type $\ab{W} \le \ep(-\Delta)^{\f{3}{2}-\delta}+c$ in $\R^{2d}$. We can apply \cite[Corollary 1.2]{Garrigue18} without $w$, replacing $v$ by $W$, and $d$ by $d'=2d$, hence the number $2d'/3$ = $4d/3$. The normalized function $\p_2$ thus cannot vanish on a set of positive measure and we have
\begin{equation*}
E_1-E_2+\sum_{1 \leq i \sle j \leq N}^N \bpa{\vde_1-\vde_2}(x_i,x_j)  = 0
\end{equation*}
	a.e. in $\R^{dN}$. We conclude by the following lemma, which also enables to prove $ii)$.
\end{proof}
\begin{lemma}\label{zeros}\tx{ }

$i)$ Let $\vde \in L^1\loc(\R^{2d})$ be even and such that
\begin{align}\label{eqlem2}
\sum_{1 \le i \sle j \le N} \vde(x_i,x_j) = 0,
\end{align}
a.e. in $\R^{dN}$. Then $\vde = 0$ a.e. in $\R^{2d}$.

$ii)$ If $v, w \in L^1\loc(\R^d)$, with $w$ even, and if
\begin{align}\label{eqlem}
\sum_{i=1}^N v(x_i) + \sum_{1 \le i \sle j \le N} w(x_i-x_j) = 0,
\end{align}
a.e. in $\R^{dN}$, then $v$ and $w$ are a.e. constant and verify
\begin{align*}
v + \f{N-1}{2} w = 0.
\end{align*}
\end{lemma}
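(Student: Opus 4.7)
For part $i)$, when $N=2$ the identity \eqref{eqlem2} is already $W(x_1,x_2)=0$ a.e., so assume $N \ge 3$. The plan is to freeze, via Fubini, a tuple $(x_3,\dots,x_N)$ in the full-measure set where \eqref{eqlem2} holds for a.e. $(x_1,x_2)$, and isolate the $(i,j)=(1,2)$ term. This produces a splitting
\[
W(x_1,x_2) = f(x_1) + f(x_2) + c \qquad \text{a.e. in } (x_1,x_2),
\]
where $f(x) := -\sum_{j=3}^{N} W(x,x_j)$ and $c$ is a constant depending on the frozen variables (the evenness of $W$ is what pairs the two summands symmetrically). Substituting this decomposition back into \eqref{eqlem2} yields $-(N-1)\sum_{i=1}^N f(x_i) + \binom{N}{2} c = 0$ a.e., and freezing $N-1$ of the $N$ variables forces $f$ to be a.e.\ constant, say $f \equiv \alpha$. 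The resulting scalar identity gives $\alpha = c/2$, whence $W = -2\alpha + c = 0$ a.e.

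For part $ii)$, I would decouple $v$ from $w$ by a translation trick. Freeze $(x_3,\dots,x_N)$ appropriately (vacuous when $N=2$) and set $V(x) := v(x) + \sum_{j=3}^N w(x-x_j)$, so that \eqref{eqlem} becomes $V(x_1) + V(x_2) + w(x_1-x_2) = \text{const}$ a.e. Translating $x_1 \mapsto x_1 + h$ and subtracting eliminates the constant and the $V(x_2)$ contribution, leaving
\[
w(y+h) - w(y) = V(x_1) - V(x_1+h) \qquad \text{a.e. in } (x_1,y,h),
\]
with $y := x_1-x_2$. Since the two sides depend on disjoint groups of variables, both are a.e.\ equal to some function $g(h)$ of $h$ alone, so each of $w$ and $V$ satisfies an a.e.\ Cauchy-type identity $F(\cdot+h) - F(\cdot) = g(h)$. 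Granting that such an identity forces $F$ to be affine a.e., the evenness of $w$ kills its linear part, so $w$ is a.e.\ constant; substituting back forces $V$ (and hence $v$) to be a.e.\ constant, and evaluating the original identity \eqref{eqlem} on constants yields $v + \tfrac{N-1}{2}w = 0$.

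The only non-routine step is the promotion of an a.e.\ relation $F(x+h) - F(x) = g(h)$ with $F \in L^1\loc(\R^d)$ to affinity of $F$. I would handle this by mollification: for a standard mollifier $\phi_\epsilon$, the smooth function $F_\epsilon := F * \phi_\epsilon$ inherits the same identity, hence $g$ must be smooth, and differentiating in $x$ then shows $\nabla F_\epsilon$ is independent of $x$, i.e.\ $F_\epsilon$ is affine. Letting $\epsilon \to 0$ in $L^1\loc$ gives that $F$ is affine a.e. All remaining manipulations are routine Fubini bookkeeping on a.e.\ statements.
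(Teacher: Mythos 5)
Your proof is correct, but it follows a genuinely different route from the paper's. For part $i)$, the paper integrates \eqref{eqlem2} against $\vp^{\otimes N}$, polarizes in $\vp$ to obtain $\int \vde\, (\chi\otimes\eta)=0$ for all test functions, and concludes by mollification; you instead Fubini-freeze the spectator variables to extract the additive splitting $\vde(x,y)=f(x)+f(y)+c$ and then freeze again to force $f$ constant. Both are valid; note only that your signs are off in the bookkeeping (substitution gives $+(N-1)\sum_i f(x_i)+\binom{N}{2}c=0$, hence $\alpha=-c/2$ and $\vde=2\alpha+c=0$ -- the errors cancel), and that the evenness of $\vde$ is not really what makes the two $f$-summands match, since the spectator terms $\vde(x_1,x_j)$ and $\vde(x_2,x_j)$ already carry $x_1,x_2$ in the first slot. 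For part $ii)$, the paper applies part $i)$ to the combined kernel $\f{1}{N-1}(v(x)+v(y))+w(x-y)$ and then uses the evenness of $w$ directly, via $v(x+y)=v(y-x)$, to get a.e.\ translation invariance of $v$; you bypass part $i)$ entirely and reduce to an a.e.\ Cauchy functional equation $F(\cdot+h)-F(\cdot)=g(h)$, resolved by mollification. Your route is longer but self-contained and slightly stronger (it shows $w$ is affine even before evenness is invoked), whereas the paper's is shorter because evenness is exploited immediately. The measure-theoretic steps you flag (Fubini freezing, passing from an a.e.\ identity to its mollified version, affinity surviving the $L^1\loc$ limit) are all sound.
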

\begin{proof}[Proof of Lemma \ref{zeros}]
$i)$ We consider $\vp \in \cC^{\ii}\ind{c}(\R^d)$ such that $\int \vp \neq 0$ and integrate \eqref{eqlem2} against $\vp^{\otimes N}$, which yields
\begin{align*}
\int \vde \vp^{\otimes 2} = 0.
 \end{align*}
We use it with $\vp = \chi + \ep \eta$, $\chi,\eta \in \cC^{\ii}\ind{c}(\R^d)$, $\int \chi \neq 0$ and $\ep$ small, and viewing the result as a polynomial in $\ep$, the coefficient in $\ep$ has to vanish, therefore
\begin{align}\label{mop2}
\int \vde \pa{\chi \otimes \eta}= 0.
 \end{align}
Let $f \in \cC^{\ii}\ind{c}(\R^d,\R_+)$ be a regularizing function such that $\int f = 1$. For $x_0 \in \R^d$ and $\ep > 0$, we denote by $f_{x_0}^{\ep}(x) \df \ep^{-d} f((x-x_0)/\ep)$ the translated and scaled function, and write $f^{\ep} \df f^{\ep}_0$. We apply \eqref{mop2} to $\chi = f_{x_0}^\ep$ and $\eta = f_{y_0}^{\ep}$ for any $x_0, y_0 \in \R^d$, to obtain
\begin{align*}
0 = \pa{\vde* \pa{ f^{\ep} \otimes f^{\ep}}}(x_0,y_0).
\end{align*}
We let $\ep \ra 0$, so $\vde * \pa{ f^{\ep} \otimes f^{\ep}} \lra \vde$ a.e. in $\R^{2d}$, and we get $\vde = 0$.

$ii)$ We apply $i)$ to $W(x,y) = \pa{N-1}\iv\pa{ v(x)+ v(y)} +  w(x-y)$, yielding 
\begin{align}\label{eqm}
v(x)+ v(y) + (N-1) w(x-y)=0
\end{align}
a.e. in $\R^{2d}$. This can be rewritten 
\begin{align*}
v(x+y) + v(y) = -(N-1) w(x) = -(N-1)w(-x) = v(-x+y) + v(y),
\end{align*}
and we have thus $v(x+y)=v(y-x)$. This implies $v(y) = v(y-2x)$ for a.e. $(x,y) \in \R^{2d}$, hence $v$ is constant and $w$ as well.
\end{proof}
A first consequence is that in a Kohn-Sham configuration, pair densities are different.
\begin{corollary}[Pair densities in the Kohn-Sham setting]
Let potentials $v_1,v_2,w \in (L^{p}+L^{\ii})(\er{d})$, with $p > \max(2d/3,2)$, where $w$ is even and not constant, such that $H^N(v_1,w)$ and $H^N(v_2,0)$ have ground states $\p_1$ and $\p_2$. Then $\rod_{\p_1} \neq \rod_{\p_2}$.
\end{corollary}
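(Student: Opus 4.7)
The plan is to argue by contradiction and directly apply Theorem \ref{hkdeux} $ii)$ with the pairs $(v_1,w_1)=(v_1,w)$ and $(v_2,w_2)=(v_2,0)$. Assume for contradiction that $\rho^{(2)}_{\psi_1}=\rho^{(2)}_{\psi_2}$ almost everywhere on $\R^{2d}$.

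First I would reduce the assumption on the pair densities to the integral hypothesis needed by Theorem \ref{hkdeux} $ii)$. Taking the marginal
\begin{equation*}
\rho_{\psi_i}(x) = \frac{2}{N-1}\int_{\R^d} \rho^{(2)}_{\psi_i}(x,y)\, \d y
\end{equation*}
of both sides immediately yields $\rho_{\psi_1}=\rho_{\psi_2}$ almost everywhere on $\R^d$. Both integrals
\begin{equation*}
\int_{\R^d}(v_1-v_2)(\rho_{\psi_1}-\rho_{\psi_2}) \quad \text{and} \quad \int_{\R^{2d}}(w-0)(x-y)\bpa{\rho^{(2)}_{\psi_1}-\rho^{(2)}_{\psi_2}}(x,y)\,\d x\,\d y
\end{equation*}
therefore vanish trivially, and in particular their sum is zero, so the hypothesis of Theorem \ref{hkdeux} $ii)$ is met (the integrability conditions are automatic since $w\in L^p+L^{\infty}$ and $v_1,v_2\in L^p+L^{\infty}$).

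Theorem \ref{hkdeux} $ii)$ then provides a constant $c\in\R$ such that $w_1 = w_2 + c$, that is $w = c$ almost everywhere on $\R^d$. This contradicts the assumption that $w$ is not constant, so the corollary follows. The only subtle point is to observe that the hypothesis is stated pointwise (equality of pair densities) and not merely as the weak duality \eqref{assump}-type condition, which makes the integral condition of Theorem \ref{hkdeux} $ii)$ automatic and bypasses any worry about sign or cancellation.
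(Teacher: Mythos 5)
Your proposal is correct and follows essentially the same route as the paper, which simply notes that the corollary is proved by contraposition via Theorem \ref{hkdeux} $ii)$; your observation that $\ro_{\p_1}=\ro_{\p_2}$ follows by taking marginals of the equal pair densities matches the paper's remark that this equality need not be assumed separately. The conclusion $w = c$ contradicting non-constancy of $w$ is exactly the intended argument.
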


The proof can be done by contraposition, applying Theorem \ref{hkdeux} $ii)$. We note that we did not need to assume that $\ro_{\p_1} = \ro_{\p_2}$, as in the Kohn-Sham setting. This corollary is to compare with a similar result in Matrix DFT, claiming that it is not possible to reproduce the ground state one-body density matrix of an interacting system by a non-interacting system driven by a different electric potential. Indeed, for interacting Coulomb systems, Friesecke proved that the ground state one-body density matrix has infinite rank \cite[Theorem 2.1]{Friesecke03b}, whereas it is a finite rank projector for non-interacting systems.

When several types of particles are present, we can use the same principle. This illustrates the robustness of Hohenberg-Kohn results for interactions. For instance let us consider a mixture of two types of particles, $N \in \N \backslash \acs{0}$ of the first type and $M \in \N \backslash \acs{0}$ of the second, either fermions or bosons, represented by a wavefunction $\p \in L^2\pa{ (\R^d)^N \times (\R^d)^M, \C}$ (anti)symmetric in the first $N$ variables and (anti)symmetric in the last $M$ variables. The interactions between the particles of the first (resp. second) group are mediated by a function $w\ind{a}$ (resp. $w\ind{b}$), and the interactions between the two types are mediated by a third one $w\ind{ab}$. An external potential $v\ind{a}$ (resp. $v\ind{b}$) acts on the first (resp. second) type of particles. The difference in masses is implemented in a constant $\alpha \neq 0$. The Hamiltonian is
\begin{multline*}
H(v\ind{a},v\ind{b},w\ind{a},w\ind{b},w\ind{ab}) \df  \sum_{i=1}^N (-\Delta_i + v\ind{a}(x_i)) + \sum_{k=N+1}^{N+M} ( -\alpha \Delta_k + v\ind{b}(x_k)) \\
	+ \sum_{1 \le i \sle j \le N} w\ind{a}(x_i-x_j) + \sum_{N+1 \le k \sle l \le N+M} w\ind{b}(x_k-x_l) + \sum_{\substack{1 \le i \le N \\ N+1 \le k \le N+M}}  w\ind{ab}(x_i-x_k).
\end{multline*}
We define the pair function of the first type
\begin{align*}
	\ro^{(2)}_{\tx{a},\p}(x,y) \df \parmi{2}{N} \int_{\R^{d(N+M-2)}} \ab{\p}^2(x,y,x_3,\dots) \d x_3 \dots \d x_{N+M},
\end{align*}
the one of the second type
\begin{multline*}
	\ro^{(2)}_{\tx{b},\p}(x,y) \df \parmi{2}{M} \int_{\R^{d(N+M-2)}} \ab{\p}^2(x_1,\dots,x_N,x,y,x_{N+3},\dots) \\
	\times \d x_1 \dots \d x_N \d x_{N+3} \dots \d x_{N+M},
\end{multline*}
 and the pair function between the two types
\begin{multline*}
	\ro^{(2)}_{\tx{ab},\p}(x,y) \df NM \int_{\R^{d(N+M-2)}} \ab{\p}^2(x,x_2,\dots,x_N,y,x_{N+2},\dots) \\
	\times \d x_2 \dots \d x_N \d x_{N+2} \dots \d x_{N+M}.
\end{multline*}
Using the symmetries, the energy is thus
\begin{multline*}
	\ps{\p,H(v\ind{a},v\ind{b},w\ind{a},w\ind{b},w\ind{ab})\p} = \sum_{i=1}^N \int_{\R^d} \ab{\na_i \p}^2 + \alpha \sum_{k=N+1}^{N+M} \int_{\R^d} \ab{\na_k \p}^2 \\
	+ \f{2}{N-1} \int_{\R^d} \ro^{(2)}_{\tx{a},\p}(x,x)v\ind{a}(x)\d x+ \f{2}{M-1}\int_{\R^d}\ro^{(2)}_{\tx{b},\p}(x,x)v\ind{b}(x)\d x\\
	 + \int_{\R^{2d}} \ro^{(2)}_{\tx{a},\p}(x,y)w\ind{a}(x-y) \d x \d y + \int_{\R^{2d}} \ro^{(2)}_{\tx{b},\p}(x,y)w\ind{b}(x-y)\d x \d y  \\
	  + \int_{\R^{2d}} \ro^{(2)}_{\tx{ab},\p}(x,y)w\ind{ab}(x-y) \d x \d y.
\end{multline*}

\begin{theorem}[Hohenberg-Kohn for different particles]\label{types}
Let $p > \max(2d/3,2)$ and let the potentials $v_{\eta,i},w_{\eta,i},w_{\tx{ab},i} \in (L^{p}+L^{\ii})(\er{d})$ for $\eta \in \acs{\tx{a,b}}$ and  $i \in \acs{1,2}$, with $w_{\eta,i}, w_{\tx{ab},i}$ even, such that $H(v_{\tx{a},i},v_{\tx{a},i},w_{\tx{a},i},w_{\tx{b},i},w_{\tx{ab},i})$ have ground states $\p_i$. If 
\begin{align*}
\ro^{(2)}_{\tx{a},\p_1} = \ro^{(2)}_{\tx{a},\p_2}, \bhs \ro^{(2)}_{\tx{b},\p_1} = \ro^{(2)}_{\tx{b},\p_2},\bhs \ro^{(2)}_{\tx{ab},\p_1} = \ro^{(2)}_{\tx{ab},\p_2}, 
\end{align*}
then $v_{\eta,1}-v_{\eta,2}$, $w_{\eta,1}-w_{\eta,2}$, and $w_{\tx{ab},1}-w_{\tx{ab},2}$ are constant, for $\eta \in \acs{\tx{a,b}}$, and satisfy
\begin{multline*}
\f{N(N-1)}{2} (w_{\tx{a},1}-w_{\tx{a},2}) + \f{M(M-1)}{2} (w_{\tx{b},1}-w_{\tx{b},2})+ NM(w_{\tx{ab},1}-w_{\tx{ab},2})\\
+N(v_{\tx{a},1}-v_{\tx{a},2}) + M(v_{\tx{b},1}-v_{\tx{b},2})  = E_1 - E_2.
\end{multline*}
\end{theorem}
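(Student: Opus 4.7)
The plan is to follow the three-step scheme already used for Theorems \ref{clHK} and \ref{hkdeux}: a variational inequality showing that $\p_2$ is also a ground state of the other Hamiltonian; a pointwise identity extracted by unique continuation; and a combinatorial argument in the spirit of Lemma \ref{zeros}. Write $H_i \df H(v_{\tx{a},i},v_{\tx{b},i},w_{\tx{a},i},w_{\tx{b},i},w_{\tx{ab},i})$. Since the kinetic parts of $H_1$ and $H_2$ coincide, $\p_2 \in \cQ(H_1)$ and Rayleigh--Ritz gives
\begin{align*}
E_1 - E_2 \le \ps{\p_2,(H_1-H_2)\p_2},
\end{align*}
whose right-hand side is a linear combination of integrals of $v_{\eta,1}-v_{\eta,2}$ against the one-body density of $\p_2$ for type $\eta$ (itself a marginal of $\ro^{(2)}_{\eta,\p_2}$) and of $w_{\eta,1}-w_{\eta,2}$, $w_{\tx{ab},1}-w_{\tx{ab},2}$ against $\ro^{(2)}_{\eta,\p_2}$ and $\ro^{(2)}_{\tx{ab},\p_2}$. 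Exchanging $1$ and $2$ yields an opposite inequality with the same integrands but involving the densities of $\p_1$. The equality of all three pair densities then turns both inequalities into equalities, so $\p_2$ is a ground state of $H_1$ and $H_1\p_2=E_1\p_2$.

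Next, subtracting $H_2\p_2=E_2\p_2$ cancels the kinetic operators and leaves
\begin{multline*}
\Bigg(\sum_{i=1}^N \bar v_{\tx{a}}(x_i) + \sum_{k=N+1}^{N+M} \bar v_{\tx{b}}(x_k) + \sum_{1\le i\sle j\le N} \bar w_{\tx{a}}(x_i-x_j) + \sum_{N+1\le k\sle l\le N+M} \bar w_{\tx{b}}(x_k-x_l) \\
+ \sum_{\substack{1\le i\le N \\ N+1\le k\le N+M}} \bar w_{\tx{ab}}(x_i-x_k) - (E_2-E_1)\Bigg)\p_2 = 0,
\end{multline*}
with $\bar v_{\eta}\df v_{\eta,1}-v_{\eta,2}$ and similarly for $w_\eta$ and $w_{\tx{ab}}$. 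Rescaling the type-$\tx{b}$ coordinates by $1/\sqrt{\alpha}$ normalises the Laplacian, and the assumption $v_{\eta,i},w_{\eta,i},w_{\tx{ab},i}\in (L^p+L^{\ii})(\R^d)$ with $p>\max(2d/3,2)$ places us in the framework of \cite[Corollary 1.2]{Garrigue18}. Hence $\p_2$ cannot vanish on a set of positive measure, and the parenthesized factor is zero a.e.\ in $\R^{d(N+M)}$.

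Finally, I would reduce to Lemma \ref{zeros} by freezing the type-$\tx{b}$ variables at a Lebesgue-generic point $y=(y_1,\dots,y_M)$, which yields
\begin{align*}
\sum_{i=1}^N \widetilde v(x_i;y) + \sum_{1\le i\sle j\le N}\bar w_{\tx{a}}(x_i-x_j) = c(y) \quad \tx{a.e.\ in } (x_1,\dots,x_N),
\end{align*}
with $\widetilde v(x;y)\df \bar v_{\tx{a}}(x) + \sum_{k=1}^M \bar w_{\tx{ab}}(x-y_k)$. Lemma \ref{zeros}\,$ii)$ then forces $\bar w_{\tx{a}}$ to be a.e.\ constant and $x\mapsto \widetilde v(x;y)$ to be a.e.\ constant for a.e.\ $y$. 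Writing $\widetilde v(x;y)=\widetilde v(x';y)$ and varying a single $y_k$ while freezing the others shows that for every $c\in\R^d$ the function $a\mapsto \bar w_{\tx{ab}}(a)-\bar w_{\tx{ab}}(a-c)$ is a.e.\ independent of $a$; comparing the values at $a=0$ and $a=c$ and using the evenness of $w_{\tx{ab}}$ gives $\bar w_{\tx{ab}}(0)-\bar w_{\tx{ab}}(c) = \bar w_{\tx{ab}}(c)-\bar w_{\tx{ab}}(0)$, so $\bar w_{\tx{ab}}$ is a.e.\ constant, and then so is $\bar v_{\tx{a}}$. The same argument with the two types exchanged yields the analogous conclusions for $\bar w_{\tx{b}}$ and $\bar v_{\tx{b}}$, after which substituting the five constants into the pointwise identity produces the stated linear relation with $E_1-E_2$. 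The main obstacle is this last step: the genuinely new mixed term $\sum_{i,k}\bar w_{\tx{ab}}(x_i-x_k)$ couples the two groups of variables and prevents a direct appeal to Lemma \ref{zeros}, which is why the short Cauchy-like argument exploiting the evenness of $w_{\tx{ab}}$ is needed to pin $\bar w_{\tx{ab}}$ down to a constant.
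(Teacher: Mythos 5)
Your proposal is correct and follows the same architecture as the paper's proof: the variational step, a (rescaled) unique continuation argument to extract a pointwise identity, and a reduction to a functional-equation lemma. The only place where you genuinely diverge is in that last step, which the paper isolates as Lemma \ref{zero2}: there, one first integrates out the type-b variables against a test function $\vp^{\otimes M}$ so as to land directly in the setting of Lemma \ref{zeros} $ii)$ (giving that $w_{\tx{a}}$, and by symmetry $w_{\tx{b}}$, is constant), and then mollifies the remaining identity into the two-point form $N v_{\tx{a}}(x) + M v_{\tx{b}}(y) + NM w_{\tx{ab}}(x-y) = 0$, from which evenness of $w_{\tx{ab}}$ shows that $Nv_{\tx{a}} - Mv_{\tx{b}}$ is constant and the conclusion follows as for \eqref{eqm}. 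You instead freeze the type-b variables at a Lebesgue-generic point and difference in a single $y_k$, obtaining that the increments $a \mapsto \bar w_{\tx{ab}}(a) - \bar w_{\tx{ab}}(a-c)$ are a.e.\ constant in $a$, after which evenness kills them. Both routes rest on the same two ingredients (Lemma \ref{zeros} $ii)$ and the evenness of $w_{\tx{ab}}$) and are of comparable length; the paper's mollification has the small advantage of producing a genuine a.e.\ identity before any pointwise manipulation, whereas your comparison of values at $a=0$ and $a=c$ must be read in the a.e.\ sense (for instance via the measure-preserving substitution $a \mapsto c-a$ combined with evenness, or after convolving with $f^{\ep}$), exactly as in the paper's own proof of Lemma \ref{zeros} $ii)$. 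This is a cosmetic point and does not affect the validity of your argument.
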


A slightly modified unique continution property is necessary for the proof, which takes into account the presence of $\alpha$. Also, the proof follows the same steps as for the standard Hohenberg-Kohn theorem, and requires the following lemma.
\begin{lemma}\label{zero2}
If $v\ind{a},v\ind{b}, w\ind{a},w\ind{b},w\ind{ab} \in L^1\loc(\R^d)$, with $w\ind{a},w\ind{b}$ and $w\ind{ab}$ even, and if
\begin{multline}\label{hypot}
\sum_{i=1}^N v\ind{a}(x_i) + \sum_{k=N+1}^{N+M} v\ind{b}(x_k)+ \sum_{1 \le i \sle j \le N} w\ind{a}(x_i-x_j) \\
+ \sum_{N+1 \le k \sle l \le N+M} w\ind{b}(x_k-x_l) + \sum_{\substack{1 \le i \le N  \\ N+1 \le k \le N+M}} w\ind{ab}(x_i-x_k)= 0,
\end{multline}
a.e. in $\R^{dN}$, then $v\ind{a},v\ind{b},w\ind{a},w\ind{b}$ and $w\ind{ab}$ are a.e. constant and verify
\begin{align*}
N v\ind{a} + Mv\ind{b} + \f{N(N-1)}{2} w\ind{a} + \f{M(M-1)}{2} w\ind{b}+ NM w\ind{ab} = 0.
\end{align*}
\end{lemma}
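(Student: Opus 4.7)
The plan is to reduce Lemma \ref{zero2} to Lemma \ref{zeros} $ii)$ applied separately to each group of variables, and then handle the cross-interaction $w\ind{ab}$ via a translation-invariance and evenness argument. I will assume $N, M \ge 2$; the degenerate cases $N = 1$ or $M = 1$ are simpler since some sums become empty and the corresponding conclusions are vacuous.

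First I would fix $(x_{N+1}, \dots, x_{N+M})$ (for a.e.\ such choice, by Fubini) and view \eqref{hypot} as an identity in the variables $(x_1, \dots, x_N)$ of the first group. The mixed sum then reads $\sum_{i=1}^N c(x_i)$ with $c(x) := \sum_{k=N+1}^{N+M} w\ind{ab}(x - x_k)$, and after absorbing the remaining terms (the $v\ind{b}$- and $w\ind{b}$-contributions, which depend only on the fixed variables) into a constant shift of $c$, the equation takes exactly the form required by Lemma \ref{zeros} $ii)$. Applying that lemma yields that $w\ind{a}$ is a.e.\ constant and that there exists a function $\Phi$ with
\begin{equation*}
v\ind{a}(x) + \sum_{k=N+1}^{N+M} w\ind{ab}(x - x_k) = \Phi(x_{N+1}, \dots, x_{N+M}) \quad \text{a.e.}
\end{equation*}
Exchanging the roles of the two groups yields symmetrically that $w\ind{b}$ is a.e.\ constant, together with the analogous functional equation for $v\ind{b}$.

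The key remaining point is to show that such a functional equation forces $v\ind{a}$ and $w\ind{ab}$ to be constant. The simultaneous translation $(x, y_1, \dots, y_M) \mapsto (x+h, y_1+h, \dots, y_M+h)$ leaves the $w\ind{ab}$-sum invariant, so $v\ind{a}(x+h) - v\ind{a}(x)$ depends only on $h$; a standard measurable Cauchy equation argument then gives that $v\ind{a}$ is affine, $v\ind{a}(x) = a \cdot x + b$. Substituting this back and fixing all but one of the $y_k$'s reduces the identity to a splitting $w\ind{ab}(x - y_1) = g(x) + h(y_1)$, which via the change of variables $z = x - y_1$ forces $g(u) - g(v)$ to depend only on $u - v$, hence $g$ and therefore $w\ind{ab}$ are affine; the evenness of $w\ind{ab}$ then collapses it to a constant, and plugging back forces $a = 0$, so $v\ind{a}$ is constant. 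Symmetrically, $v\ind{b}$ is constant. Substituting all five constants into \eqref{hypot} finally produces the claimed linear relation.

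I expect the last step to be the principal obstacle: translation invariance alone only yields affine solutions, and the evenness hypothesis on $w\ind{ab}$ is precisely what rules out a genuine linear cross-interaction of the form $\beta \cdot z$. Without evenness, the conclusion would fail, which also explains why the lemma statement requires all three two-body potentials to be even.
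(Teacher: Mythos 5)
Your argument is correct, and while the first step (reducing to Lemma \ref{zeros} $ii)$ to get $w_{\mathrm{a}},w_{\mathrm{b}}$ constant) matches the paper's in spirit, you treat the cross term $w_{\mathrm{ab}}$ by a genuinely different mechanism. The paper integrates \eqref{hypot} against a product of mollifiers $f^{\ep}_{x}\otimes\cdots\otimes f^{\ep}_{y}$ and lets $\ep\to 0$ to land directly on the pointwise two-variable identity
\begin{equation*}
N v_{\mathrm{a}}(x) + M v_{\mathrm{b}}(y) + NM\, w_{\mathrm{ab}}(x-y) = 0 \quad \text{a.e. in } \R^{2d},
\end{equation*}
then uses evenness \emph{early} to deduce that $Nv_{\mathrm{a}}-Mv_{\mathrm{b}}$ is constant, reducing to the form \eqref{eqm} already solved inside Lemma \ref{zeros} $ii)$ via $v(x+y)=v(y-x)\Rightarrow v(y)=v(y-2x)$. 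You instead slice by Fubini, solve the resulting functional equation $v_{\mathrm{a}}(x)+\sum_k w_{\mathrm{ab}}(x-y_k)=\Phi(y)$ through a measurable Cauchy-equation argument yielding affine solutions, and invoke evenness only at the very end to collapse the affine part of $w_{\mathrm{ab}}$. Both routes are sound; the paper's mollification is slightly cleaner on the measure-theoretic side (your slicing and a.e.\ Cauchy steps need the standard but unstated care about null sets, comparable to what the paper itself glosses over), while your route isolates more transparently \emph{why} evenness is indispensable --- your observation that a linear $w_{\mathrm{ab}}(z)=\beta\cdot z$ compensated by linear $v_{\mathrm{a}},v_{\mathrm{b}}$ would give a nonconstant solution is a genuine added insight not spelled out in the paper. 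One small redundancy: once you have the splitting $w_{\mathrm{ab}}(x-y_1)=g(x)+h(y_1)$ by fixing $y_2,\dots,y_M$, the affineness of $w_{\mathrm{ab}}$ (hence, by evenness, its constancy) follows and then $v_{\mathrm{a}}(x)=\Phi(y)-Mw_{\mathrm{ab}}$ is constant immediately, so the preliminary diagonal-translation argument showing $v_{\mathrm{a}}$ affine can be dispensed with.
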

\begin{proof}
	We start by taking $\vp \in \cC^{\ii}\ind{c}(\R^d)$ such that $\int \vp = 1$, we multiply \eqref{hypot} by $\prod_{k=N+1}^{N+M} \vp(x_k)$ and we integrate over $(x_k)_{N+1 \le k \le N+M} \in \pa{\R^d}^M$, which yields
\begin{align*}
\sum_{i=1}^N \pa{ v\ind{a}(x_i) + M w\ind{ab} * \vp} + \sum_{1 \le i \sle j \le N} w\ind{a}(x_i-x_j) = c,
\end{align*}
where $c$ is a constant. Applying Lemma \ref{zeros}, $ii$), we know that $w\ind{a}$ is constant, and by symmetry, $w\ind{b}$ as well. We thus want to show that if
\begin{align*}
\sum_{i=1}^N v\ind{a}(x_i) + \sum_{k=N+1}^{N+M} v\ind{b}(x_k) + \sum_{\substack{1 \le i \le N  \\ N+1 \le k \le N+M}} w\ind{ab}(x_i-x_k)= 0,
\end{align*}
then $v\ind{a},v\ind{b}$ and $w\ind{ab}$ are constant. With similar notations as in the proof of Lemma \ref{zeros} $i)$, and choosing $f$ even, we take any $x,y \in \R^d$ and multiply the equation by 
 \begin{align*}
f^{\ep}_{x}(x_1) \dots f^{\ep}_{x}(x_N)f^{\ep}_{y}(x_{N+1}) \dots f^{\ep}_{y}(x_{N+M}),
 \end{align*}
 and then integrate over all coordinates, giving
 \begin{align*}
N v\ind{a} * f^{\ep}(x) + M v\ind{b} * f^{\ep}(y) + NM \pa{w\ind{ab} * f^{\ep} * f^{\ep}}(x-y) = 0.
 \end{align*}
 Letting $\ep \ra 0$ yields
\begin{align}\label{eqd}
 N v\ind{a} (x) + M v\ind{b} (y) + NMw\ind{ab}(x-y) = 0,
 \end{align}
for a.e. $(x,y) \in \R^d$. Using the fact that $w\ind{ab}$ is even, we have
\begin{align*}
N v\ind{a} (x) + M v\ind{b} (y) & = - NMw\ind{ab}(x-y) = -NMw\ind{ab}(y-x) \\
& = Nv\ind{a} (y) + M v\ind{b} (x),
 \end{align*}
hence
\begin{align*}
\pa{N v\ind{a} - Mv\ind{b}} (x)  = \pa{N v\ind{a} -M v\ind{b}} (y) 
 \end{align*}
a.e. so $Nv\ind{a}-Mv\ind{b} = c$ is constant. Equation \eqref{eqd} can then be rewritten
 \begin{align*}
 v\ind{b} (x) + v\ind{b}(y) + N w\ind{ab}(x-y) + c/M= 0,
 \end{align*}
which is similar to \eqref{eqm}. With the same argument as in the proof of Lemma \ref{zeros} $ii)$, we can proove that $v\ind{b}, v\ind{a}$ and $w\ind{ab}$ are constant.
\end{proof}

Many quantum models are replaced by approximate effective ones in which interactions are changed, because they are easier to study and are exact in some limits. Theorem \ref{types} shows that they cannot have the same ground state pair correlations, hence it provides a limit to their predictive power.

By measuring pair functions of bound ground states, one can thus exhaustively reconstruct the interactions between particles in non relativistic quantum settings, using an inverse procedure. In particular, this could be applied to the phonon-mediated effective interactions between electrons in a superconducting medium.

\section{Absence of Hohenberg-Kohn in Spin and Matrix DFT} 

We have seen that for the standard model, a Hohenberg-Kohn theorem holds. However, for other models, this is not necessarily the case. In Spin DFT, it is well-known that it does not hold \cite{CapVig01}, however we will show a partial Hohenberg-Kohn result in this model. We also give a counterexample to a Hohenberg-Kohn theorem in Matrix DFT.

\subsection{Partial Hohenberg-Kohn in Spin DFT} 

Spin DFT was founded by von Barth and Hedin in \cite{BarHed72}. It is a version of Density Functional Theory based on a variant of the Pauli Hamiltonian, in which the coupling between the current and the magnetic field is neglected. The only magnetic feature taken into account is the Zeeman interaction. It is a very active field of research in quantum physics and chemistry \cite{BarHed72,PanRaj72,RajCal73,GunLun76,Rajagopal80,EscPic01,KohSavUll05,PanSah12,PanSah15,ReiBorTel17}. This framework enables to study the relations between the ground state magnetization and the electromagnetic field.

We consider the Hamiltonian of Spin DFT
\begin{align*}
H^N (v,B) = \sum_{i=1}^N \bpa{- \Delta_i + \sigma_i \cdot B(x_i)+ v(x_i)} + \sum_{1 \leq i < j \leq N} w(x_i-x_j).
\end{align*}
 where $\cdot$ is the scalar product of $\R^d$. We fix the dimension $d = 3$ and recall the definition of the Pauli matrices 
\begin{align*}
\sigma^x = \begin{pmatrix} 0 & 1 \\ 1 & 0 \end{pmatrix}, \qquad \sigma^y = \begin{pmatrix} 0 & -i \\ i & 0 \end{pmatrix}, \qquad \sigma^z = \begin{pmatrix} 1 & 0 \\ 0 & -1 \end{pmatrix},
\end{align*}
 acting on one-particle two-components wavefunctions $\phi = \pa{\phi^{\upa}, \phi^{\doa}}^{T}$, where $\phi^{\upa}, \phi^{\doa} \in L^2(\er{d},\C)$ and $\int_{\R^d} \abs{\phi}^2 = 1$. In this case, $\sigma_i \cdot B(x_i) = B_{x}(x_i) \sigma^x_i +B_{y}(x_i) \sigma^y_i + B_{z}(x_i) \sigma^z_i$. The state of the system is described by antisymmetric and normalized wavefunctions $\p \in L^2_{\tx{a}}\pa{(\R^d \times \acs{\upa,\doa})^N}$. We introduce their one-body densities
 \begin{multline*}
 \ro^{\alpha \beta}_{\p}(x) \df \hspace{-0.5cm} \sum_{ (s_2,\dots,s_N) \in \acs{\upa,\doa}^{N-1}} \sum_{i=1}^N \int_{\er{d(N-1)}} \p\pa{\alpha,x;s_2,x_2;\dots} \ov{\p}\pa{\beta,x;s_2,x_2;\dots}\\
 \times \d x_2 \cdots \d x_N,
 \end{multline*}
 where $\alpha, \beta \in \acs{\upa, \doa}$, but for the sake of simplicity we will not always write the subscript $\p$. We remark that $\ro^{\upa\doa} = \ov{\ro^{\doa\upa}} \eqdef \xi$.
We define the density $\ro \df \ro^{\upa} + \ro^{\doa}$ and the magnetization
 \begin{align}\label{magne}
 m \df \begin{pmatrix} \ro^{\upa \doa} + \ro^{\doa\upa} \\ -i\pa{\ro^{\upa \doa} - \ro^{\doa\upa}} \\ \ro^{\upa \upa} - \ro^{\doa\doa} \end{pmatrix} = \mat{ 2\re \xi \\ 2 \im \xi \\\ro^{\upa \upa} - \ro^{\doa\doa} }.
\end{align}
 The quantum wavefunction is coupled to the external magnetic field only via the magnetization. Indeed, using fermionic statistics, we have
\begin{align*}
\biggps{\p,\sum_{i=1}^N \pa{\sigma_i \cdot  B(x_i)} \p} = \int_{\R^3} B \cdot m_{\p} .
\end{align*}

It is well-known that there is no complete Hohenberg-Kohn theorem in this model, due to a counterexample of Capelle and Vignale \cite{CapVig01} recalled below (see also \cite{EscPic01}). More explicitely, if the ground state densities and magnetizations of two systems are equal, this does not imply that the external potentials and magnetic fields are equal.
However, in the following theorem, we show that those assumptions imply a strong constraint on the external fields imposed on the system. This specifies the relation between $(v,B)$ and $(\ro,m)$.

\begin{theorem}[Partial Hohenberg-Kohn for Spin DFT]\label{spinhk}
Let $p >2$, let $w,v_1,v_2 \in (L^p+L^{\ii})(\er{3},\R)$ be potentials and $B_1,B_2 \in (L^p+L^{\ii})(\er{3},\R^3)$ be magnetic fields. We assume that $H^N(v_1,B_1)$ and $H^N(v_2,B_2)$ have ground states, which we denote by $\p_1$ and $\p_2$.
If
\begin{align}\label{condspin1}
\int_{\R^3}(v_1-v_2) (\ro_{\p_1} - \ro_{\p_2}) + \int_{\R^3} (B_1-B_2) \cdot (m_{\p_1}-m_{\p_2})  = 0,
\end{align}
then
\begin{align}\label{constraint}
\abs{B_1-B_2} \chi = \f{E_1-E_2}{N} + v_2 - v_1,
\end{align}
where $\chi$ is a measurable function taking its values in the discrete and finite set $\{-1,-1+\tfrac{2}{N},{-1+\tfrac{4}{N}},$ $ \dots, 1-\tfrac{2}{N}, 1\}$.
\end{theorem}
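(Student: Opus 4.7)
The plan is to follow the scheme of Theorem \ref{clHK}, derive a pointwise matrix equation on $\R^{3N}$, diagonalize it in the spin tensor product, and then use a Lebesgue density argument to descend to the stated identity on $\R^3$. First I would insert $\p_2$ as a trial state for $H^N(v_1,B_1)$ and exchange the labels $1\leftrightarrow 2$, producing the two inequalities
\begin{align*}
E_1-E_2 & \le \int_{\R^3}(v_1-v_2)\ro_{\p_2} + \int_{\R^3}(B_1-B_2)\cdot m_{\p_2}, \\
E_2-E_1 & \le \int_{\R^3}(v_2-v_1)\ro_{\p_1} + \int_{\R^3}(B_2-B_1)\cdot m_{\p_1}.
\end{align*}
Adding them and using \eqref{condspin1} gives $0\le 0$, so both are equalities, and in particular $\p_2$ is also a ground state of $H^N(v_1,B_1)$. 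Subtracting this from the Schr\"odinger equation for $H^N(v_2,B_2)$ produces, at a.e. point of $\R^{3N}$, the identity
\begin{align*}
\pa{\sum_{i=1}^N\bpa{V(x_i)+\sigma_i\cdot B(x_i)}-(E_1-E_2)}\p_2=0,
\end{align*}
with $V\df v_1-v_2$ and $B\df B_1-B_2$. The unique continuation property available for this class of Pauli-type Hamiltonians then guarantees that $\p_2$ does not vanish on any set of positive measure.

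Since the operators $\sigma_i\cdot B(x_i)$ act on distinct spin factors they mutually commute, and where $B(x_i)\neq 0$ they are diagonalized by orthogonal projectors $P_{\pm,i}(x_i)$ with eigenvalues $\pm|B(x_i)|$. Expanding $\p_2$ pointwise in the corresponding joint eigenbasis (choosing an arbitrary basis where $|B(x_i)|=0$) turns the matrix equation into one scalar equation per $\ep\in\acs{\pm 1}^N$: the $\ep$-component of $\p_2$ vanishes unless $\sum_i V(x_i)+\sum_i\ep_i|B(x_i)|=E_1-E_2$. Combined with unique continuation this yields
\begin{align*}
\R^{3N}=\bigcup_{\ep\in\acs{\pm 1}^N}S_\ep \tx{ up to a Lebesgue null set,}
\end{align*}
where $S_\ep\df\bigl\{(x_1,\ldots,x_N)\st \sum_i V(x_i)+\sum_i\ep_i|B(x_i)|=E_1-E_2\bigr\}$.

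To descend to $\R^3$, I would introduce the push-forward measure $\mu$ of Lebesgue measure on $\R^3$ (localized to a ball, for finiteness) by $\Phi:x\mapsto(V(x),|B(x)|)\in\R^2$. The previous step rephrases as: $\mu^{\otimes N}$ is supported on the finite union $\bigcup_\ep H_\ep$ of affine hyperplanes $H_\ep=\bigl\{(u_i)_i\in\R^{2N}\st \sum_i u_{i,1}+\sum_i\ep_i u_{i,2}=E_1-E_2\bigr\}$. Let $F\subset\R^2$ be the set of $u$ such that $Nu_1+(2k-N)u_2\neq E_1-E_2$ for every $k\in\acs{0,\ldots,N}$; I claim $\mu(F)=0$. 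Otherwise the Lebesgue density theorem for the Radon measure $\mu$ supplies a density point $u\in F$, and because the Euclidean distance from the diagonal point $(u,\ldots,u)\in\R^{2N}$ to each hyperplane $H_\ep$ is strictly positive, for $r$ small enough the product ball $B_r(u)^N$ is disjoint from $\bigcup_\ep H_\ep$. This forces $\mu(B_r(u))^N=\mu^{\otimes N}(B_r(u)^N)=0$, contradicting that $u$ is a density point. Hence $\mu(F)=0$, which translates into $V(x)+\chi(x)|B(x)|=(E_1-E_2)/N$ with $\chi(x)\in\acs{-1,-1+\tfrac{2}{N},\ldots,1}$ for almost every $x\in\R^3$, i.e. precisely \eqref{constraint}. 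The principal obstacle is exactly this last descent, because the diagonal of $\R^{3N}$ is a null set and the a.e. property inherited from the hypothesis does not restrict to it directly; the Lebesgue density argument bypasses this issue by localizing the hypothesis on small product neighbourhoods and using the strict positivity of the distance from the diagonal to each $H_\ep$ whenever no admissible $\chi$ exists.
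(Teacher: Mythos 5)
Your argument is correct and follows the same skeleton as the paper's proof: the variational Hohenberg--Kohn step shows $\p_2$ is a common ground state, subtracting the two Schr\"odinger equations gives a pointwise matrix identity, unique continuation for the Pauli operator upgrades this to the spectral condition almost everywhere on $\R^{3N}$, and the joint diagonalization of $\sum_{i}\sigma_i\cdot B(x_i)$ (which is exactly Lemma \ref{lemmas}, rederived inline in your proposal) identifies the relevant eigenvalues as $\sum_i V(x_i)+\sum_i\ep_i\ab{B(x_i)}$. Where you genuinely diverge is the one delicate step, the descent from the a.e.\ statement on $\R^{3N}$ to the diagonal: the paper restricts to sets $L_n$ on which $(v,B)$ is continuous via Lusin's theorem and then uses continuity of the eigenvalue map to pass to diagonal points, whereas you push Lebesgue measure forward under $x\mapsto(V(x),\ab{B(x)})$, note that the product measure is carried by a finite union of hyperplanes in $\R^{2N}$, and derive a contradiction from a point of the exceptional set $F$ lying in the support of $\mu$, since a small product ball around it has positive $\mu^{\otimes N}$-measure yet stays at positive distance from every hyperplane. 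Both are at bottom Lebesgue density arguments and both are valid; yours avoids Lusin's theorem and the continuity of eigenvalues (the affine geometry supplies the positive distance directly) and makes explicit the point the paper leaves somewhat implicit, namely why an a.e.\ identity on $\R^{3N}$ can legitimately be restricted to the null diagonal set. The only detail you omit is the measurability of the selection $\chi$, which the paper settles in two lines and which is immediate here as well (take, say, the smallest admissible value of $\chi(x)$ on the measurable set where one exists, and an arbitrary fixed value elsewhere).
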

In particular, the condition \eqref{condspin1} is satisfied when $(\ro_{\p_1},m_{\p_1}) = (\ro_{\p_2},m_{\p_2})$.
We remark that on connected subsets where $v_1,v_2,B_1,B_2$ are continuous and where $B_1 \neq B_2$ everywhere, $\chi$ is continuous hence constant. Also, when $N$ is odd, $\chi$ can never vanish, thus if we have $v_1 -E_1/N= v_2 - E_2/N$ and $m_{\p_1} = m_{\p_2}$, we can deduce that $B_1 = B_2$. At fixed $v$, if we formally define a function $f_{\tx{HK}}(B) \df (m_{\p},E)$ associating to a magnetic field the ground state magnetization and energy, Theorem \ref{spinhk} implies that $f_{\tx{HK}}$ is injective, so it is bijective on its image. This shows that knowing $m_{\p}$ and $E$ enables to know $B$. Consequently, for a fixed $v$ and $N$ odd, all physical quantities are functionals of the ground state pair $(m,E)$.

To build a counterexample, Capelle and Vignale \cite{CapVig01} start from a system which ground state is an eigenvalue of the $z$ momentum operator $\sum_{i=1}^N \sigma_i^z$, and where the ground energy of $H^N(v_1,0)$ is isolated from the rest of the spectrum. For instance one can choose to start from $B_1 = 0$, with a binding $v_1$. Then they perturb this initial system by adding $B_2 = b e_z$ where $b \in \R$ is so small that there is no energy levels crossing, and they keep the electric potential $v_2 = v_1$ unchanged. In this case, our equation \eqref{constraint} becomes ${N b \chi = E_1-E_2}$, where $\chi$ is as in the statement of the theorem.

\begin{proof}
Following the same steps as in the proof of the standard Hohenberg-Kohn theorem, we can show that $\ps{\p_2,H^N(v_1,B_1) \p_2} = E_1$. Thus $\p_2$ verifies Schr\"odinger's equation of the first operator $H^N(v_1,B_1) \p_2 = E_1 \p_2$. Taking the difference with $H^N(v_2,B_2) \p_2 = E_2 \p_2$ yields
\begin{align*}
\pa{ \sum_{i=1}^N \pa{\indic_{2^N \times 2^N}v(x_i)  + B(x_i)\cdot\sigma_i  } } \p_2 = 0,
\end{align*}
where $v \df v_1 - v_2 + (E_2-E_1)/N$ and $B \df B_1 - B_2$. By strong unique continuation for the Pauli operator \cite{Garrigue19}, $\p_2$ does not vanish on sets of positive measure, thus we get
\begin{align}\label{ddd}
0 \in \sigma \pa{ \sum_{i=1}^N \pa{\indic_{2^N \times 2^N}v(x_i) + B(x_i)\cdot\sigma_i } }
\end{align}
	a.e. in $\R^{dN}$. We work in a cube $C \df \seg{0,a}^d$ for a fixed $a \in \R_+$. The function $x \mapsto v(x) + B(x) \cdot \sigma_i$ is continuous on a set $L_n$ such that $\ab{C \backslash L_n} \le 1/n$, by Lusin's theorem. Hence, $\sum_{i=1}^N v(x_i) + B(x_i) \cdot \sigma_i$ is continuous on $\pa{L_n}^N$. We take the limit $n \ra +\ii$ and use that the map giving the eigenvalues of a matrix is continuous, to infer that
\begin{align*}
0 \in \sigma \pa{  N v(x)  + B(x) \cdot \sum_{i=1}^N \sigma_i }
\end{align*}
for a.e. $x \in C$. We extend this conclusion a.e. in $x \in \R^3$ by letting $a \ra +\ii$. Eventually, one deduces that for a.e. $x \in \R^3$, at least one of the eigenvalues of $N v(x)  + B(x) \cdot \sum_{i=1}^N \sigma_i$ vanishes. Those eigenvalues are
\begin{align*}
\acs{ N v(x) + j \ab{B(x)}  }_{ j \in \acs{-N,-N+2,\dots,N-2,N } },
\end{align*}
by next Lemma \ref{lemmas}, for which we give a proof in the appendix.
\begin{lemma}[Local diagonalization of the Zeeman interaction]\label{lemmas}\tx{ }

For $B \in L^1\loc(\R^3,\R^3)$, $\sum_{i=1}^N B(x_i)\cdot\sigma_i  $ is diagonalizable in the spin variables, with eigenvalues
\begin{align*}
\acs{ \sum_{i=1}^N (-1)^{s_i} \ab{B(x_i)} }_{ (s_i)_{1 \le i \le N} \in \acs{0,1}^N  },
\end{align*}
a.e. in $\R^{dN}$.
\end{lemma}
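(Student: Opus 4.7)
The plan is to exploit the tensor-product structure of the spin Hilbert space $(\C^2)^{\otimes N}$ and reduce the claim, pointwise in $(x_1,\dots,x_N)$, to two elementary facts: (i) any family of pairwise commuting self-adjoint operators on a finite-dimensional Hilbert space is simultaneously diagonalizable, and (ii) for any $b\in\R^3$, the $2\times 2$ Hermitian matrix $b\cdot\sigma$ has eigenvalues $\pm|b|$.

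First, I would fix a measurable representative of $B\in L^1\loc(\R^3,\R^3)$, so that $B(x)$ is defined for almost every $x\in\R^3$, and hence $B(x_i)$ is defined for almost every $(x_1,\dots,x_N)\in\R^{dN}$ by Fubini. On this full-measure set, each $B(x_i)\cdot\sigma_i$ acts on $(\C^2)^{\otimes N}$ as $I\otimes\cdots\otimes I\otimes (B(x_i)\cdot\sigma)\otimes I\otimes\cdots\otimes I$, i.e.\ nontrivially only on the $i$-th factor. Since operators supported on distinct tensor factors commute, the family $\{B(x_i)\cdot\sigma_i\}_{1\le i\le N}$ consists of pairwise commuting self-adjoint operators, hence is simultaneously diagonalizable.

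Second, I would compute the single-particle spectrum: $b\cdot\sigma$ is Hermitian, traceless, with $\det(b\cdot\sigma)=-(b_x^2+b_y^2+b_z^2)=-|b|^2$, so its eigenvalues are $\pm|b|$. Choosing, pointwise in $x_i$, an orthonormal eigenbasis $(e_i^+(x_i),e_i^-(x_i))$ of $B(x_i)\cdot\sigma$ in each single-particle factor, the tensor products
\[
e_1^{(-1)^{s_1}}(x_1)\otimes\cdots\otimes e_N^{(-1)^{s_N}}(x_N),\qquad (s_1,\dots,s_N)\in\{0,1\}^N,
\]
form a joint eigenbasis of the commuting family and produce the eigenvalue $\sum_{i=1}^N (-1)^{s_i}|B(x_i)|$ for the total operator $\sum_i B(x_i)\cdot\sigma_i$, which is the list claimed.

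I do not anticipate any real obstacle: the lemma is essentially a pointwise linear-algebra computation, with measurability of $x\mapsto|B(x)|$ inherited from $B$. The only minor caveat is that at points where $|B(x_i)|=0$ for some $i$, the two eigenvalues on the $i$-th factor collapse and the single-particle eigenbasis is not uniquely determined; this has no effect on the list of eigenvalues stated, since the formula $\sum_i (-1)^{s_i}|B(x_i)|$ remains correct (with the corresponding value simply having higher multiplicity).
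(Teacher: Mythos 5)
Your proof is correct and follows essentially the same route as the paper: both diagonalize each one-body operator $B(x_i)\cdot\sigma$ pointwise (eigenvalues $\pm\ab{B(x_i)}$) and tensor the resulting eigenbases to diagonalize the sum. The only difference is presentational --- the paper writes the rotated spin eigenvectors explicitly, while you obtain the eigenvalues abstractly via trace and determinant and invoke simultaneous diagonalizability of the commuting family; your handling of the degenerate case $B(x_i)=0$ matches the paper's remark that those factors are already diagonal.
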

Hence, for each $x\in \R^3$ such that $B(x) \neq 0$, and such that $v(x)$ and $B(x)$ are finite, there is a real number $\chi(x)\in$ $\acs{-1,-1+2/N,\dots,1-2/N,1 }$ respecting $v(x) + \chi(x) \ab{B(x)} =0$. On the set
 \begin{align*}
	 \acs{x \in \R^3 \st B(x) \neq 0, \ab{v(x)} + \ab{B(x)} \sle +\ii},
 \end{align*}
$\chi$ is measurable, and on the complementary space we choose $\chi(x) = 1$ for instance. The function $\chi$ is thus measurable.
\end{proof}

\subsection{Counterexample in Matrix DFT}\label{counter} 

Founded by Gilbert \cite{Gilbert75}, Reduced Density Matrix Functional Theory (RDMFT), hereafter called Matrix DFT, is extensively used in quantum chemistry \cite{MerKat77,DonPar78,Levy79,Valone80,YanZhaAye00,MarLat08,ShaDewLat08,BloPruPot13,Jones15,BalCanGro15,SchKamBlo17}. This method is similar to standard DFT but the central internal quantity is now the one-particle reduced density matrix (1RDM) which is defined by
\begin{equation*}
\gamma_{\Gamma}(x,y) \df N \int_{\er{d(N-1)}} \Gamma(x,x_2,\dots,x_N ; y,x_2,\dots,x_N) \d x_2 \cdots \d x_N
\end{equation*}
for mixed states $\Gamma$, and reduces to
\begin{equation*}
\gamma_{\p}(x,y) = N \int_{\er{d(N-1)}} \p(x,x_2,\dots,x_N) \ov{\p}(y,x_2,\dots,x_N) \d x_2 \cdots \d x_N
\end{equation*}
in the case of pure states $\Gamma = \ketbra{\p}{\p}$. The kinetic energy part of the exchange-correlation functional is not approximated, since the kinetic energy is an exact functional of the 1RDM. The drawback is that it is computationally more expensive than standard DFT, because 1RDM kernels have two space arguments instead of one. The external non-local potentials at stake in this framework are one-body operators of
 \begin{align*}
\cG \df \acs{ G = G^* \st \cD\bpa{\ab{G}^{\ud} } \subset H^1(\R^d), \hs\hs \forall \ep >0 \hs\hs \exists c_{\ep} \ge 0 \hs\hs  \ab{ G} \le -\ep \Delta + c_{\ep} }.
 \end{align*}
The multiplication by $v \in (L^{d/2}+L^{\ii})(\R^d)$, or a gradient magnetic operator $-2iA \cdot \na$ where $A \in (L^d+L^{\ii})(\R^d)$, belong to this class. One obtains the full Pauli Hamiltonian when $G = -2 i A \cdot \na + A^2 + \sigma \cdot \rot A$. Also, we can choose it to act by multiplication on the momentum space $(\hat{G \p})(k) = g(k) \hat{\p}(k)$ with $\ab{g} \le \ep k^2 + c_{\ep}$ for any $\ep > 0$. An operator $G \in \cG$ acts on many-body functions by the second quantized $G^{(N)} \in \cL(L^2(\R^{dN}))$ of $G$ on the $N$-particle sector
\begin{equation}\label{nonloc}
G^{(N)} = \sum_{i=1}^N G_i,
\end{equation}
where each $G_i$ is a copy of $G$ acting on the $i^{\tx{th}}$ body, that is $G_i \df \indic \otimes \dots\otimes G \otimes \dots \otimes \indic$. For instance, when $G$ has a kernel, 
 \begin{align*}
\pa{G^{(N)} \p}(x_1,\dots,x_N) = \sum_{i=1}^N  \int_{\R^d} G(x_i,y) \p(x_1,\dots, x_{i-1}, y , x_{i+1}, \dots, x_N) \d y.
 \end{align*}
The Hamiltonian in this formalism is
\begin{equation}\label{ham}
H^N(G) = \sum_{i=1}^N -\Delta_{i} + \sum_{i=1}^N G_i + \sum_{1 \leq i \sle j \leq N} w(x_i-x_j),
 \end{equation}
 and the operators in $\cG$ are such that all $H^N(G)$ have the same form domain $H^1(\R^d)$. A state is coupled to the external non-local potential only via its one-body density matrix, and the interaction energy of $\p$ with $G$ is equal to $\tr G \gamma_{\p}$.

In his pioneering work, Gilbert \cite{Gilbert75} wrote that the proof of Hohenberg and Kohn cannot be extended to $G \mapsto \gamma$, but he remarked that the injectivity $\p \mapsto \gamma_{\p}$ holds, when $\p$ is taken in the set of ground states of $H^N(G)$'s, and where $G$'s are chosen in $\cG$ and produce non-degenerate ground states. Here we provide a counter-example showing that the map $G \mapsto \gamma$ is not injective. This implies that in its most general formulation, Matrix DFT is ill-posed because it does not respect uniqueness.

Let $G_1 \in \cG$ be such that $H(G_1)$ has a unique ground state $\p_1$ isolated from the rest of the spectrum. 
We assume that 
 \begin{align}\label{assumption}
 \exists \phi \in H^1(\R^d), \mediumint \ab{\phi}^2 = 1 \tx{ such that } \pa{\phi \wedge L_{\tx{a}}^2(\R^{d(N-1)})} \perp \p_1.
 \end{align}
The assumption \eqref{assumption} is equivalent to the existence of an orthonormal basis $\acs{\phi_i}_{i \in \N}$ such that $\phi_1$ never appears in the decomposition of $\p_1$ on the basis built from $\acs{\phi_i}_{i \in \N}$. We take $G_2 = G_1 +  \ep \proj{\phi}$. We have $\proj{\phi} \in \cG$ and
\begin{align*}
H^N(G_2) \p_1 = H^N(G_1) \p_1 +  \ep \pa{\sum_{i=1}^N \proj{\phi}_i } \p_1 = E_1\p_1.
\end{align*}
We also have $H^N(G_1) \le H^N(G_2)$ in the sense of forms, therefore by the min-max principle, $E_1 \le E_2$ and $\p_1$ is a ground state of $H^N(G_2)$. Hence we have two Hamiltonians having different non-local potentials but the same ground state. We get counterexamples if we can find $w$ and $G_1$ such that \eqref{assumption} holds. This is the case when $w=0$ because then, $\p_1$ is a Slater determinant. We have thus here a class of counterexamples to the Hohenberg-Kohn property $G \mapsto \gamma$. This argument also works for the corresponding theory considering mixed states, because when $w=0$, the ground state is still a Slater determinant. We note that in \cite[Theorem 2.1]{Friesecke03b}, Friesecke proved that for interacting Coulomb systems and denoting by $\p$ the ground state, $\rg \gamma_{\p} = +\ii$, but it is not proved that $\im \gamma_{\p}$ spans the full space, in which case the condition \eqref{assumption} would be wrong for Coulomb interactions. 
This uniqueness property for non-local potentials when $w=\ab{\cdot}\iv$ is thus an open problem, one could also try to determine it in the simpler Hartree-Fock model for instance.

\section{Recovering Hohenberg-Kohn in Thermal DFT}\label{thdft} 
In this section we show that Hohenberg-Kohn theorems which do not hold at zero temperature are true when $T >0$. Founded by Mermin \cite{Mermin65}, Thermal DFT is the Density Functional Theory framework dealing with a fixed positive temperature. This formalism was used in several works \cite{PitProFlo11,BalCanGro15,RuizSky13,PriPitGro14,BurSmiGra16,Dharma16,SmiSagBur18}, in particular to model warm dense matter \cite{GraDesMic14}, and nuclear matter.
We provide here a new formalism of Thermal DFT enabling to work with systems having possibly different temperatures.

The set of $N$-particle canonical states is 
\begin{align*}
\cS_{\tx{can}}^N \df \acs{ \Gamma \in \cL\bpa{L^2_{\tx{a}}(\R^{dN})} \bigst 0 \le \Gamma=\Gamma^*, \tr \Gamma =1},
\end{align*}
 and the set of grand canonical states is 
\begin{align*}
\cS_{\tx{gc}} \df \acs{ \Gamma \in \cL\pa{\C \oplus_{N=1}^{+\ii}L^2_{\tx{a}}(\R^{dN}) } \bigst 0 \le \Gamma=\Gamma^*, \tr \Gamma =1}.
\end{align*}
The entropy is denoted by $S_{\Gamma} \df - \tr \Gamma \ln \Gamma$. We denote the particle number operator by $\cN$, the average number of particles of a state $\Gamma$ is $\tr \cN \Gamma$. We define the $k$-particle matrix densities
 \begin{align*}
 \Gamma^{(k)} \df \sum_{k \ge n} \parmi{n}{k} \tr_{k+1 \ra n} \Gamma,
 \end{align*}
 and the density
\begin{align*}
\ro_{\Gamma}(x) \df \Gamma^{(1)}(x,x).
\end{align*}
In the grand canonical case, for states $\Gamma = G_0 \oplus G_1 \oplus \cdots$ where $G_n \geq 0$, we define the partial densities $\ro_n(x) \df n \pa{\tr_{2 \ra n} G_n} (x,x)$, and thus $\ro_{\Gamma}(x) = \sum_{n\geq 1} \ro_n(x)$. Also, the number of particles is the total mass of the density $\int \ro_{\Gamma}  = \tr \cN \Gamma$. We define the second quantized operators $\bbK \df \oplus_{n=1}^{+\ii} \sum_{i=1}^n -\Delta_i$, $\bbV \df \oplus_{n=1}^{+\ii} \sum_{i=1}^n v(x_i)$, $\bbW \df \oplus_{n \ge 2} \sum_{1 \le i \sle j \le n} w(x_i-x_j)$, and
\begin{align*}
\bbH(v) \df \bbK + \bbV + \bbW.
\end{align*}
The wavefunction is coupled to the external potential only via its density, $\tr \bbV \Gamma = \int v \ro_{\Gamma}$. We denote by $H^N(v)$ the usual $N$-body Hamiltonian \eqref{opcl}.
The free energy of canonical states is
\begin{align*}
\cE_{v,T}^N(\Gamma_N) & \df \tr H^N(v) \Gamma_N - T S_{\Gamma_N} \\
& = \tr H^N(0) \Gamma_N + \int v \ro_{\Gamma_N}  - T S_{\Gamma_N}.
\end{align*}
The free energy of grand canonical states is
\begin{align*}
\cE_{v,T}(\Gamma) & \df \tr \bbH (v) \Gamma - TS_{\Gamma} \\
& = \tr \bbH(0) \Gamma + \int v\ro_{\Gamma} - TS_{\Gamma}.
\end{align*}
We can add a chemical potential in the manner $\cE_{v,T,\mu}(\Gamma) \df \cE_{v,T}(\Gamma) - \mu \tr \cN \Gamma$ but $\cE_{v,T,\mu} = \cE_{v-\mu,T}$ so without loss of generality, we can set $\mu = 0$ (or change $v \ra v - \mu$ everywhere), and work with $\cE_{v,T}$.
By Gibbs' variational principle, when $$\tr e^{-\bbH(v)/T}, \tr e^{-H^N(v)/T} \sle +\ii,$$ those functionals have a unique minimizer, called the Gibbs state. It is $\Gamma = Z\iv e^{-\bbH(v)/T}$ in the grand canonical ensemble, and $\Gamma_N = Z\iv e^{-H^N(v)/T}$ in the canonical one, where $Z \df \tr e^{-\bbH(v)/T}$ (resp. $Z \df \tr e^{-H^N(v)/T}$) is the partition function.

\subsection{A Hohenberg-Kohn theorem for $(v,T) \mapsto (\ro,S)$} 

We can now show a complete Hohenberg-Kohn theorem for Thermal DFT, extending Mermin's statement \cite{Mermin65} to systems having different temperatures, and to the canonical case. We define the exponent
\begin{equation}\label{exp}
\left\{
\begin{array}{ll}
p = d/2 & \tx{ for } d \ge 3,\\
p > 1 & \tx{ for } d =2, \\
p = 1 & \tx{ for } d =1.
\end{array}
\right.
\end{equation}

\begin{theorem}[Hohenberg-Kohn at positive temperatures]\label{hkt}
Let $T_1,T_2 >0$, and $p$ as in \eqref{exp}. Let $v_1,v_2 \in L^p\loc(\R^d)$, $w,(v_1)_-,(v_2)_- \in (L^{p}+L^{\ii})(\R^d)$ be potentials such that $\tr e^{-\bbH(v_j)/T_j} < +\ii$ for $j \in \acs{1,2}$. We denote by $\Gamma_1, \Gamma_2$ the grand canonical Gibbs states corresponding respectively to $\cE_{v_1,T_1}$ and $\cE_{v_2,T_2}$, and we assume that $\int (v_1)_+ \ro_{\Gamma_2}$ and $\int (v_2)_+ \ro_{\Gamma_1}$ are finite. If 
\begin{align*}
-\pa{T_1-T_2} \pa{S_{\Gamma_1} - S_{\Gamma_2}}  +\int_{\R^d} (v_1-v_2) (\ro_{\Gamma_1}-\ro_{\Gamma_2})= 0,
\end{align*}
then $T_1 = T_2$, $v_1 = v_2$, $Z_1=Z_2$ and $\Gamma_1 = \Gamma_2$. In the canonical setting, with similar assumptions we deduce that $T_1 = T_2$, $v_1 = v_2+ T_1 \ln \f{Z_2}{Z_1}$ and $\Gamma_1 = \Gamma_2$.
\end{theorem}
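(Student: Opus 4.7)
\emph{Proof proposal.} The plan is to couple the two Gibbs variational inequalities with the hypothesis to force $\Gamma_1=\Gamma_2$, then to separate $T$ from $v$ by taking the logarithm of this operator identity and reading it sector by sector.

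By Gibbs' variational principle, each $\Gamma_j$ is the unique minimizer of $\cE_{v_j, T_j}$. Testing $\Gamma_2$ in $\cE_{v_1,T_1}$ and using that $\bbH(v_1)-\bbH(v_2)$ acts on the $n$-particle sector as $\sum_{j=1}^n (v_1-v_2)(x_j)$, so that $\tr(\bbH(v_1)-\bbH(v_2))\Gamma_2 = \int (v_1-v_2)\ro_{\Gamma_2}$, one gets
\begin{align*}
F_1 \le \cE_{v_1,T_1}(\Gamma_2) = F_2 + (T_2-T_1)S_{\Gamma_2} + \int(v_1-v_2)\ro_{\Gamma_2},
\end{align*}
where $F_j \df \cE_{v_j,T_j}(\Gamma_j)$. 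Swapping labels gives the symmetric inequality for $F_2-F_1$, and summing the two the right-hand side is exactly the negative of the hypothesis quantity, hence vanishes. Both inequalities are therefore equalities, and uniqueness of the Gibbs minimizer forces $\Gamma_1 = \Gamma_2$.

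Since $\Gamma_i = Z_i\iv e^{-\bbH(v_i)/T_i}$ are strictly positive self-adjoint operators, taking the logarithm of the equality $\Gamma_1 = \Gamma_2$ yields the operator identity
\begin{align*}
\frac{\bbH(v_1)}{T_1} + \ln Z_1 = \frac{\bbH(v_2)}{T_2} + \ln Z_2,
\end{align*}
which in the grand canonical case holds sector by sector on Fock space. The $0$-particle sector (where $\bbH$ vanishes) immediately gives $Z_1 = Z_2$, and the $1$-particle sector then gives $(1/T_1 - 1/T_2)(-\Delta) = v_2/T_2 - v_1/T_1$. Since the right-hand side is multiplication by a function in $L^p\loc$ while $-\Delta$ is a second-order differential operator of a different type, both sides must vanish, yielding $T_1 = T_2$ and $v_1 = v_2$.

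The canonical case proceeds analogously on the unique $N$-particle sector: separating the differential part of $H^N(0)$ from the multiplication pieces forces $T_1 = T_2 \eqdef T$, and then $\sum_j (v_1-v_2)(x_j) = T\ln(Z_2/Z_1)$ a.e.\ in $\R^{dN}$; Lemma~\ref{zeros}~$ii)$ applied with no pair interaction (after subtracting the constant $T\ln(Z_2/Z_1)/N$) shows that $v_1-v_2$ is constant and identifies its value. The main technical subtlety lies in turning the density-matrix equality into the logarithmic operator identity rigorously on Fock space, controlling the unbounded generators via a common form domain; once that identity is in place, the separation of $T$ from $v$ is automatic from the mismatch between differential and multiplication operators.
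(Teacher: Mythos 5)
Your proposal is correct and follows essentially the same route as the paper: Gibbs' variational principle plus the hypothesis forces $\Gamma_1=\Gamma_2$ by uniqueness of the Gibbs minimizer, taking logarithms gives the operator identity on Fock space, the zero-particle sector yields $Z_1=Z_2$, and the one-body sector separates the Laplacian from the multiplication operator to give $T_1=T_2$ and $v_1=v_2$. The one step you compress --- ``separating the differential part from the multiplication pieces'' on the antisymmetric $N$-particle sector in the canonical case --- is precisely the content of the paper's Lemma~\ref{lemfree2}, which requires testing against explicitly constructed Slater-type wavefunctions because the identity only holds on antisymmetric functions; you correctly flag this as the technical subtlety but do not carry it out.
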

This result shows a duality between the internal equilibrium state quantities $(\ro,S)$ and the external imposed quantities $(v,T)$. The equilibrium state density and entropy of a quantum system contains the information of $(v,T)$. In particular, all physical quantities are functionals of the ground state pair $(\ro,S)$. Nevertheless, we are not aware of any functional of the ground state energy defined in terms of $(\ro,S)$, approximating the exact one. 

\begin{proof} By the standard proof of the Hohenberg-Kohn theorem, we show that $\cE_{v_1,T_1}(\Gamma_2) = E_1$ using that $\Gamma_2$ belongs to the quadratic form domain of $\bbH(v_1)$. Gibbs minimizers are unique, this can be proved by using Klein's inequality \cite[(5.3)]{Ruelle} or Pinsker, so $\Gamma_1 = \Gamma_2$, that is
\begin{align*}
Z_1\iv e^{-(\bbT + \bbW + \bbV_1)/T_1} = Z_2\iv e^{-(\bbT + \bbW + \bbV_2)/T_2},
\end{align*}
on the antisymmetric Fock space. Restricting to the zero-body sector yields $Z_1 = Z_2$, so we can rewrite
\begin{align*}
e^{-\inv{T_1}(\bbT + \bbW +\bbV_1)} = e^{-\f{\alpha}{T_1} \pa{ \bbT + \bbW +\bbV_2}},
\end{align*}
with $\alpha \df T_1 / T_2$. By injectivity of the exponential map for operators, we obtain
\begin{align*}
\bbT + \bbW +\bbV_1 = \alpha\pa{\bbT + \bbW +\bbV_2},
\end{align*}
on the antisymmetric Fock space. Restricting to the one-body sector gives
\begin{align*}
-\Delta +  v_1  = \alpha\pa{ -\Delta +v_2},
\end{align*}
and
\begin{align*}
(\alpha-1) \Delta  =  \alpha v_2 - v_1.
\end{align*}
In $H^1\loc(\R^d)$, the left-hand side is translation invariant, thus so is the right hand side, and we deduce that $\alpha = 1$, and then $v_1 = v_2$.

If we consider the canonical setting instead of the grand canonical one, we only have a constraint on the antisymmetric $N$-body sector but the proof is the same. Indeed, the same reasoning leads to
\begin{multline}\label{eqa}
\sum_{i=1}^N  \bpa{-\Delta_i + v_1(x_i)} + \sum_{1 \le i \sle j \le N} w(x_i-x_j) \\
= \alpha \pa{\sum_{i=1}^N  \bpa{-\Delta_i + v_2(x_i)} + \sum_{1 \le i \sle j \le N} w(x_i-x_j) - N T_2 \ln \f{Z_1}{Z_2}}
\end{multline}
	in $L^2\ind{a}(\R^{dN})$. We conclude by using the following lemma.
\end{proof}
\begin{lemma}\label{lemfree2}
	Let $A,B \in L^1\loc(\R^3,\R^3)$, $v, w \in L^1\loc(\R^3)$, with $w$ even, and $\alpha \in \R$ be such that 
\begin{multline*}
\indic_{2 \times 2}  \pa{ \sum_{j=1}^N \bpa{\alpha (-\Delta_i) -2i  A(x_i) \cdot \na_i +  v(x_i)} + \sum_{1 \le j \sle k \le N} w(x_j-x_k)} \\
+ \sum_{j=1}^N \sigma_j \cdot B(x_j) = 0,
\end{multline*}
	as an operator of $L^2\ind{a}(\R^{dN})$ (or of the bosonic counterpart). Then $\alpha = 0$, $A = 0$, $B=0$, $v$ and $w$ are constant and verify $v + w(N-1)/2=0$.
\end{lemma}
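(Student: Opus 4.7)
The strategy is a three-stage reduction: first peel off the magnetic term $B$ by probing with fully polarized spin states, then exploit Slater determinants built from functions with pairwise disjoint supports to reduce to a pointwise identity in $\R^{dN}$ on which Lemma~\ref{zeros}~$ii)$ can act, and finally extract $A$ and $\alpha$ by letting one factor of the determinant vary freely while the others stay on plateaus.

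For the first stage, I would apply the operator identity to states of the form $\psi \otimes \chi_{\upa}^{\otimes N}$ and $\psi \otimes \chi_{\doa}^{\otimes N}$, with $\psi$ arbitrary in $L^2\ind{a}(\R^{dN})$ (or in its symmetric analogue for the bosonic case). Using $\sigma^x \chi_{\upa} = \chi_{\doa}$, $\sigma^y \chi_{\upa} = i\chi_{\doa}$, $\sigma^z \chi_{\upa} = \chi_{\upa}$ and their mirror formulas on $\chi_{\doa}$, the image decomposes against the orthonormal spin basis of $(\C^2)^{\otimes N}$, and each coefficient must vanish identically. The coefficient of each once-flipped configuration reads $\psi(x)(B_x \pm iB_y)(x_j) = 0$, which forces $B_x = B_y = 0$ by the arbitrariness of $\psi$. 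Adding the two unflipped identities yields that the spinless operator $T_0 \df \alpha \sum_j(-\Delta_j) - 2i \sum_j A(x_j)\cdot\na_j + \sum_j v(x_j) + \sum_{j<k} w(x_j-x_k)$ annihilates all of $L^2\ind{a}(\R^{dN})$, while subtracting them gives $\psi \sum_j B_z(x_j) = 0$ for every $\psi$, so $\sum_j B_z(x_j) = 0$ a.e.; fixing $x_2,\dots,x_N$ and varying $x_1$ shows $B_z$ is a.e. constant, and then $NB_z=0$ implies $B_z = 0$.

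For the second stage, pick $\varphi_1,\dots,\varphi_N \in \cC^{\ii}\ind{c}(\R^d)$ with pairwise disjoint supports and set $\psi = \varphi_1 \wedge \cdots \wedge \varphi_N$. On the cell $\supp\varphi_1 \times \cdots \times \supp\varphi_N$ only the identity permutation contributes, so there $\psi = (N!)^{-1/2}\prod_j \varphi_j(x_j)$. When each $\varphi_j$ is taken equal to $1$ on a plateau $U_j$, all derivatives of the $\varphi_j$ vanish on $U_1 \times \cdots \times U_N$, and the equation $T_0 \psi = 0$ collapses to $\sum_j v(x_j) + \sum_{1 \le j < k \le N} w(x_j-x_k) = 0$. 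Letting the $U_j$ move to cover any generic configuration in $\R^{dN}$, this identity holds a.e., and Lemma~\ref{zeros}~$ii)$ yields $v$ and $w$ a.e. constant with $v + (N-1)w/2 = 0$.

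Now keep the same setup but let $\varphi_1$ be an arbitrary element of $\cC^{\ii}\ind{c}(\R^d)$, while $\varphi_2,\dots,\varphi_N$ remain equal to $1$ on plateaus disjoint from $\supp\varphi_1$. A direct computation of $T_0 \psi$ on the cell, using the constancy of $v, w$ together with $Nv + \binom{N}{2}w = 0$, shows that all scalar contributions cancel and the equation reduces to the pointwise identity $-\alpha \Delta\varphi_1(x_1) - 2i A(x_1)\cdot\na\varphi_1(x_1) = 0$ on $\supp\varphi_1$. Since $\varphi_1 \in \cC^{\ii}\ind{c}(\R^d)$ is arbitrary, choosing $\varphi_1(x) = x_i\,\eta(x)$ for each coordinate direction with a bump $\eta$ equal to $1$ on an open set $V$ not crossing $\acs{x_i = 0}$ gives $\Delta\varphi_1 = 0$ and $\na\varphi_1 = e_i$ on $V$, hence $A_i = 0$ a.e. on $V$; sliding $V$ yields $A \equiv 0$. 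Then $\alpha\Delta\varphi_1 = 0$ for every $\varphi_1$, and the choice $\varphi_1(x) = |x|^2\eta(x)$ with $\Delta\varphi_1 = 2d$ on the plateau of $\eta$ forces $\alpha = 0$. The main technical care is tracking the $L^1\loc$ regularity of the coefficients: the products $v\psi$, $A \cdot \na\psi$, $w(x_i-x_j)\psi$ need only be interpreted distributionally, and the passage from "identity on each cell" to "identity a.e." is measure-theoretic in the style of Lemma~\ref{zeros}, so nothing beyond those standard tools should be needed.
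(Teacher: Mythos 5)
Your proof is correct and shares the paper's overall skeleton --- probe the spin part with explicit product spinors to kill $B$, then test the remaining spinless identity on Slater determinants of disjointly supported orbitals, and finish with Lemma \ref{zeros} $ii)$ --- but the step extracting $\alpha$ and $A$ is carried out by a genuinely different device. The paper builds orbitals $\phi_i(x) = (\chi e^{-a|\cdot|})(x-y_i)$ with well-separated centers, computes $(-\alpha \Delta - 2iA\cdot\nabla)(\chi e^{-a|x|})$ explicitly, and reads off $\alpha = 0$ and then $A = 0$ from the coefficients of the resulting polynomial identity in the auxiliary parameter $a$, deducing that the multiplication part vanishes only afterwards. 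You reverse the order: plateau orbitals (all derivatives vanishing on the cell) first isolate the multiplication part and give $\sum_j v(x_j) + \sum_{j<k} w(x_j-x_k) = 0$ a.e.\ directly, and only then do you let one orbital vary, choosing $x_i\,\eta$ and $|x|^2\eta$ to read off $A_i$ and $\alpha$ pointwise. Your route is arguably cleaner: it avoids the exponential weight and the polynomial-in-$a$ identification, and makes transparent that $A$ and $\alpha$ are determined by purely local first- and second-order tests once the zeroth-order part is known to cancel. Two cosmetic remarks: the restriction that the plateau avoid $\{x_i = 0\}$ is unnecessary (only $\nabla\varphi_1 = e_i$ on the plateau matters), and you reuse the letter $V$ for both the open set and the total potential; neither affects correctness.
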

\begin{proof}
	We only provide a proof in the fermionic case, since the bosonic one follows from similar arguments. We choose a basis of the spin variable $(s_1, \dots, s_N)_{s_i \in \acs{ \upa,\doa}}$ in which the first element is $\ket{\upa,\dots,\upa}$ and the second one is $\ket{\upa,\dots,\upa,\doa}$. We consider a purely spatial antisymmetric function $\psi \in L\ind{a}^2(\R^{dN})$ and form the fermionic function $\p \df \psi \otimes \sum_{s_i \in \acs{\upa,\doa}} \ket{s_1,\dots,s_N}$, symmetric in spin and antisymmetric in space. We apply \eqref{eqa}, seen as a matrix in the previous spin basis, to our function, and we substract the second component of the resulting vector from the first one, which yields
\begin{align*}
(B_z-i B_y)(x_N) \psi(x_1,\dots,x_N) = 0,
\end{align*}
a.e. in $\R^{dN}$. Since this holds for any antisymmetric $\psi$, we deduce that $B_z = i B_y$, and since $B$ is real, then $B_z = B_y = 0$. A similar argument leads to $B_x = 0$. We thus have
\begin{align}\label{lam}
  \sum_{j=1}^N \bpa{\alpha (-\Delta_i) -2i  A(x_i) \cdot \na_i +  v(x_i)} + \sum_{1 \le j \sle k \le N} w(x_j-x_k) = 0
 \end{align}
on $L^2\ind{a}(\R^{dN})$. We choose a smooth localizing function $\chi$ with support in the ball $B_r \subset \R^d$ for some $r \in \R_+$, we take centers $y_1,\dots,y_N \in \R^3$ separated by distances larger than $r$, we form the orbitals $\phi_i(x) \df (\chi e^{-a\ab{\cdot}})(x-y_i)$ and the $N$-body wavefunction $\Phi \df \wedge_{i=1}^N \phi_i$. We apply \eqref{lam} to it, yielding
\begin{align}\label{eqa1}
 V \Phi + \sum_{i=1}^N \phi_1 \wedge \cdots \wedge \pa{ -\alpha \Delta -2i A \cdot \na} \phi_i \wedge \cdots \wedge \phi_N  = 0,
 \end{align}
where $V \df \sum_{j=1}^N v(x_i) + \sum_{1 \le j \sle k \le N} w(x_j-x_k)$. We compute
 \begin{multline*}
	 \pa{ -\alpha \Delta -2i A \cdot \na} \pa{ \chi(x) e^{-a\ab{x}}}  = \f{e^{-a\ab{x}}}{\ab{x}} \big(   -a^2\alpha \ab{x} \chi  \\ 
	 + 2a \pa{ \alpha \pa{ \chi + x \cdot \na \chi} + i \chi A \cdot x } -\alpha \ab{x} \Delta \chi -2i\ab{x} A \cdot \na \chi  \big).
 \end{multline*}
	Finally we evaluate \eqref{eqa1} in the neighborhood of one of the $y_i$'s, and as a polynomial in $a$, the resulting equation implies $\alpha = 0$, and then $A = 0$ a.e. in an open region, which can be extended to the whole space $\R^d$ by moving the $y_i$'s. Hence we can also deduce that $V = 0$ a.e. in $\R^{dN}$, and we conclude by using Lemma \ref{zeros}.
\end{proof}

We could have stated the theorem with the assumption $T_1, T_2 \in \R_+$. In this case, we should also assume that $v_1,v_2,w \in L^{q}\loc(\R^d)$ with $q > \max(2d/3,2)$ in order to use the standard Hohenberg-Kohn theorem in the case $T_1 = T_2 = 0$. Indeed, if $T_1 = 0$ and $T_2 > 0$, the minimizer $\Gamma_1$ of the first functional is pure, but it minimizes $\cE_{v_2,T_2}$ and is in its variational minimization set. Consequently it is equal to the Gibbs state $\Gamma_2$ which is not pure. There is a contradiction and therefore this configuration is impossible, $T_1$ and $T_2$ are either both equal to zero, or both strictly positive. We emphasize the fact that when temperatures are strictly positive, the proof does not involve any unique continuation argument, and is thus much simpler than at zero temperature.

We showed that the knowledge of the ground state pair $(\ro,S)$ contains the information of $(v,T)$. We conjecture now that the knowledge of $\ro$ alone does not contain the information of $(v,T)$, i.e. that the map $(v,T) \mapsto \ro$, giving the density of the Gibbs state for the pair $(v,T)$, is not injective. 

\subsection{Lifting ill-posedness at positive temperature} 

We are going to see now that increasing the temperature generically removes the ill-posedness of Hohenberg-Kohn theorems. We begin to analyze Current DFT, for which there are counterexamples to a Hohenberg-Kohn theorem \cite{CapVig02,LaeBen14}. We define the paramagnetic current of a wavefunction $\p$ by
 \begin{align*}
	 j_{\p}(x) & \df \im \sum_{(s_k)_{1 \le k \le N} \in \acs{\upa,\doa}^N} \sum_{i=1}^N \int_{\R^{d(N-1)}} \ov{\p^{s_k}} \na_i \p^{s_k}  \d x_1 \cdots \d x_{i-1} \d x_{i+1} \cdots \d x_N,
 \end{align*}
 and its total current by $j_{\tx{t}} \df j + \ro A + \rot m$, where the magnetization of a state is defined in \eqref{magne}. The Hamiltonian that we consider is the many-body Pauli operator
\begin{align*}
	H^N (v,A,w)   \df \sum_{j=1}^N \pa{\bpa{\sigma_j \cdot \pa{-i\na_j + A(x_j)} }^2 +  v(x_j)}+ \sum_{1 \le i \sle j \le N} w(x_i-x_j),
\end{align*}
which is $\bbH(v,A,w) \df \oplus_{n=1}^{+\ii} \sum_{j=1}^n  \bpa{\sigma_j \cdot (-i\na_j + A(x_j))}^2 + \bbV + \bbW$ in the grand canonical setting.
In the case of canonical and grand canonical states, we can also define a current and a magnetization $j_{\Gamma},m_{\Gamma}$ by the decomposition of $\Gamma$ into pure states.
We choose the Coulomb gauge, and using fermionic statistics and the grand canonical ensemble, the kinetic energy of a state can be expressed by
 \begin{multline*}
\tr \pa{\sum_{j=1}^N \bpa{\sigma_j \cdot \pa{-i\na_j + A(x_j)} }^2 } \Gamma  \\
	 =  \tr (-\Delta) \Gamma + \int A^2 \ro_{\Gamma}  +  \int A \cdot (2j_{\Gamma} +\rot m_{\Gamma}).
 \end{multline*}

\begin{theorem}[Hohenberg-Kohn for interacting Pauli systems at $T >0$]\label{thm2}
Let $T_1,T_2>0$, let $v_1,v_2 \in L^{3/2}\loc(\R^3)$, let $A_1,A_2 \in (L^3+L^{\ii})(\R^3)$, $(v_1)_-,(v_2)_-$, $w_1,w_2,$ $\ab{\rot A_1}, \ab{\rot A_2} \in (L^{3/2}+L^{\ii})(\R^3)$ be potentials such that $w_1$ and $w_2$ are even and such that the two grand canonical partition functions $\tr e^{-\bbH(v_j,A_j,w_j)/T_j}$ are finite. We denote by $\Gamma_1, \Gamma_2$ the grand canonical Gibbs states corresponding to the free energies in the Pauli model with temperature, and we assume that all the quantities involved in \eqref{assumption1} are finite. If
\begin{multline}\label{assumption1}
- (T_1-T_2)(S_{\Gamma_1}-S_{\Gamma_2}) + \int_{\R^d} (v_1-A_1^2-v_2+A_2^2)(\ro_{\Gamma_1}-\ro_{\Gamma_2}) \\
+ \int_{\R^d} (A_1 - A_2)\cdot (2j_{\tx{t},\Gamma_1} - \rot m_{\Gamma_1} - 2j_{\tx{t},\Gamma_2} +\rot m_{\Gamma_2} ) \\
+ \int_{\er{2d}}\pa{w_1-w_2}(x-y) \bpa{\ro_{\Gamma_1}\de-\ro_{\Gamma_2}\de}(x,y) \d x \d y =0,
\end{multline}
then there exists $c\in \R$ such that $T_1 = T_2$, $A_1 = A_2$, $w_1 = w_2 + c$, $v_1 = v_2 -c(N-1)/2$, $Z_1=Z_2$ and $\Gamma_1 = \Gamma_2$. In the canonical setting, we deduce that $T_1 = T_2$, $A_1 = A_2$, $w_1 = w_2 + c$, $v_1 = v_2 + T_1 \ln \f{Z_2}{Z_1} - c(N-1)/2$, and $\Gamma_1 = \Gamma_2$.
\end{theorem}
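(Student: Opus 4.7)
The plan is to imitate the proof of Theorem \ref{hkt}, replacing the Schrödinger Hamiltonian by the many-body Pauli operator and invoking Lemma \ref{lemfree2} (whose $A,B$-dependence was included precisely for this purpose) in place of the simpler Lemma \ref{zeros}. The argument splits into a Gibbs-swap step producing $\Gamma_1=\Gamma_2$, an exponentiation step producing an operator identity on Fock space, and an $N$-body reduction to which Lemma \ref{lemfree2} applies.

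First I would check, from the form-boundedness assumptions on $A_i,v_i,w_i$, that each $\Gamma_i$ lies in the variational class of the other free energy. Using the Coulomb-gauge expansion
\begin{equation*}
\tr\bpa{\sigma\cdot(-i\na+A)}^2\Gamma=\tr(-\Delta)\Gamma+\int A^2\ro_\Gamma+\int A\cdot(2j_\Gamma+\rot m_\Gamma)
\end{equation*}
together with the identity $j=j_{\tx{t}}-\ro A-\rot m$, I rewrite $\cE_{v_1,A_1,w_1,T_1}(\Gamma_2)-\cE_{v_2,A_2,w_2,T_2}(\Gamma_2)$ and its counterpart with $\Gamma_1$ in terms of $\ro$, $j_{\tx{t}}$, $m$ and $\ro^{(2)}$ only. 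Adding the two Gibbs inequalities yields a single inequality whose obstruction to equality is exactly the left-hand side of \eqref{assumption1}; hypothesis \eqref{assumption1} thus forces equality throughout, and uniqueness of the Gibbs state (Klein's inequality, as in the proof of Theorem \ref{hkt}) gives $\Gamma_1=\Gamma_2$.

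From the resulting identity $Z_1^{-1}e^{-\bbH(v_1,A_1,w_1)/T_1}=Z_2^{-1}e^{-\bbH(v_2,A_2,w_2)/T_2}$, restriction to the zero-particle sector gives $Z_1=Z_2$, and injectivity of the operator exponential on self-adjoint operators bounded below yields the Fock-space identity $\bbH(v_1,A_1,w_1)=\alpha\,\bbH(v_2,A_2,w_2)$ with $\alpha=T_1/T_2$. Restricting to the $N$-particle antisymmetric sector and expanding each Pauli square in the Coulomb gauge, I obtain
\begin{align*}
\sum_{j=1}^N\bigl[(1-\alpha)(-\Delta_j)&-2i(A_1-\alpha A_2)(x_j)\cdot\na_j+(v_1+A_1^2-\alpha v_2-\alpha A_2^2)(x_j)\\
&+\sigma_j\cdot(\rot A_1-\alpha\rot A_2)(x_j)\bigr]+\sum_{1\le j<k\le N}(w_1-\alpha w_2)(x_j-x_k)=0,
\end{align*}
which is exactly the hypothesis of Lemma \ref{lemfree2}. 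The lemma then gives $1-\alpha=0$, so $T_1=T_2$; next $A_1-\alpha A_2=0$, so $A_1=A_2$ (and $\rot A_1-\alpha\rot A_2=0$ is consistent); and finally the constancy of $w_1-\alpha w_2=c$ together with the algebraic relation $(v_1+A_1^2-\alpha v_2-\alpha A_2^2)+c(N-1)/2=0$, which with $\alpha=1$ and $A_1=A_2$ reduces to $v_1-v_2=-c(N-1)/2$. In the grand canonical setting the identity holds on every $N$-particle sector simultaneously, which forces $c=0$, whence $v_1=v_2$ and $w_1=w_2$; in the canonical setting $N$ is fixed and the extra $T_1\ln(Z_2/Z_1)$ from the normalization survives, exactly as in Theorem \ref{hkt}.

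The main obstacle I anticipate is the first step: the Pauli kinetic cross-terms contribute $A^2\ro$, $A\cdot(2j+\rot m)$ and unavoidable $(A_1-A_2)^2\ro$ pieces to $\tr(\bbH_1-\bbH_2)(\Gamma_2-\Gamma_1)$, and one has to check that the combination assumed in \eqref{assumption1}, written in the $(v-A^2,\,j_{\tx{t}})$ variables rather than the $(v,j)$ ones, is precisely what is needed to cancel these cross-terms after both Gibbs inequalities are added. Once that algebraic identity is verified, the remaining steps are structurally identical to those of Theorem \ref{hkt}, with Lemma \ref{lemfree2} replacing Lemma \ref{zeros}.
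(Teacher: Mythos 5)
Your proposal is correct and follows essentially the same route as the paper, whose entire proof of this theorem is the single remark that it ``follows the one of Theorem \ref{hkt} and uses Lemma \ref{lemfree2}'' — precisely the Gibbs-swap, $Z_1=Z_2$ from the zero-particle sector, injectivity of the operator exponential, sector restriction, and Lemma \ref{lemfree2} chain you describe. The algebraic bookkeeping you flag at the end (checking that \eqref{assumption1}, written in the $(v-A^2,\,2j_{\tx{t}}-\rot m)$ variables, is exactly the sum of the two Gibbs inequalities once the paramagnetic/total-current conversion and its $A$-quadratic cross terms are carried out) is indeed the only step the paper leaves implicit, and is worth writing out explicitly.
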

The proof follows the one of Theorem \ref{hkt} and uses Lemma \ref{lemfree2}. The assumption \eqref{assumption1} is fulfilled when $$\bpa{\ro_{\Gamma_1},j_{\tx{t},\Gamma_1},m_{\Gamma_1},S_{\Gamma_1}} = \bpa{ \ro_{\Gamma_2},j_{\tx{t},\Gamma_2},m_{\Gamma_2},S_{\Gamma_2}}$$ for instance. This remedies to the ill-posedness of the problem at zero temperature. Following ideas of Ruggenthaler and Tellgren \cite{Ruggenthaler15,Tellgren18,Garrigue19}, another way to do so is by adding internal magnetic degrees of freedom.

The last theorem (without spin) holds for classical canonical and grand canonical systems, and the proof is the same. A similar statement with different interactions is present in Henderson \cite{Henderson74}. In the classical canonical case, states are symmetric probability measures ${\mu \in \cP\ind{s}(\R^{dN} \times \R^{dN})}$ on the phase space, and the current is 
 \begin{align*}
j_{\mu}(x) \df N \int_{\R^{d(N-1)} \times \R^{dN}} p_1 \mu(x,p_1,x_2,p_2,x_3,p_3,\dots) \d x_2 \dots \d x_N \d p_1 \dots \d p_N.
 \end{align*}
 Other internal variables can be defined similarly, and the assumption
\begin{multline*}
-(T_1-T_2)(S_{\mu_1}-S_{\mu_2}) + \int_{\R^{d}} (v_1+A_1^2-v_2-A_2^2)\pa{\ro_{\mu_1}-\ro_{\mu_2}}  \\
+ 2 \int_{\R^d} (A_1-A_2) \cdot (j_{\mu_1}-j_{\mu_2}) \\
 + \int_{\R^d \times \R^d}\pa{w_1-w_2}(x-y) \bpa{\ro_{\mu_1}\de-\ro_{\mu_2}\de}(x,y) \d x \d y  =0,
\end{multline*}
has the same consequences as in the quantum counterpart. In particular, it is ensured that $(\ro\de,j,S)$ contains all the information of a classical system at equilibrium. Without magnetic fields and at fixed temperatures, $\ro\de$ alone contains all the information. It is nevertheless important that temperatures are striclty positive in the classical case, whereas they can vanish in the quantum case.

A similar version of the next theorem at fixed temperature in the grand canonical case was presented in \cite{BalCanGro15}. It shows that at positive temperature, Matrix DFT is well-posed.

\begin{theorem}[Hohenberg-Kohn for non-local potentials at $T>0$]\label{thm1}
Let $T_1,T_2 >0$, let $p$ be as in \eqref{exp}, let $v \in L^{p}\loc(\R^d)$ with $v_- \in (L^{p}+L^{\ii})(\R^d)$ be a trapping potential, let $G_1,G_2 \in \cG$ be such that $G_j \ge \ep\Delta - c_{\ep}$ for any $\ep >0$, and such that $\tr e^{-\pa{\bbH(v) + \oplus_{n=0}^{+\ii} \sum_{i=1}^n \pa{G_j}_i }/T_j}$ are finite. We denote by $\Gamma_1, \Gamma_2$ the grand canonical Gibbs states corresponding to the free energies with temperature in this model, and we assume that $\int G_1 \gamma_{\Gamma_2}$ and $\int G_2 \gamma_{\Gamma_1}$ are finite. If
\begin{align}\label{assumptionMatrix}
- (T_1-T_2)(S_{\Gamma_1}-S_{\Gamma_2}) + \tr (G_1-G_2) (\gamma_{\Gamma_1}-\gamma_{\Gamma_2}) = 0,
\end{align}
then $T_1 = T_2$, $G_1 = G_2$, $Z_1=Z_2$ and $\Gamma_1 = \Gamma_2$. In the canonical setting, we deduce that $T_1 = T_2$, $G_1 = G_2 + T_1 \ln \f{Z_2}{Z_1}$ and $\Gamma_1 = \Gamma_2$.
\end{theorem}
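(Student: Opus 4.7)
The plan is to follow closely the proof of Theorem \ref{hkt}, substituting the final translation-invariance step by a form-boundedness argument adapted to the class $\cG$. Applying the standard Hohenberg-Kohn scheme at positive temperature, I would first show that $\cE_{v,T_1,G_1}(\Gamma_2) = E_1$: the hypothesis that $\int G_1 \gamma_{\Gamma_2}$ is finite ensures that $\Gamma_2$ lies in the quadratic form domain of $\bbH(v) + \oplus_{n=0}^{+\ii}\sum_{i=1}^n (G_1)_i$, so combining the Gibbs variational inequality, its reverse, and hypothesis \eqref{assumptionMatrix} forces equality of the free energies at each state. Uniqueness of Gibbs minimizers (via Klein's or Pinsker's inequality) then gives $\Gamma_1 = \Gamma_2$.

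Setting $\bbG_j := \oplus_{n=0}^{+\ii}\sum_{i=1}^n (G_j)_i$ and $\alpha := T_1/T_2$, the identity $\Gamma_1 = \Gamma_2$ reads
\begin{equation*}
Z_1\iv e^{-(\bbH(v)+\bbG_1)/T_1} = Z_2\iv e^{-(\bbH(v)+\bbG_2)/T_2}.
\end{equation*}
Restricting to the zero-body sector yields $Z_1 = Z_2$, and injectivity of the exponential on self-adjoint operators gives $\bbH(v) + \bbG_1 = \alpha(\bbH(v) + \bbG_2)$ on the antisymmetric Fock space. Restricting to the one-body sector produces
\begin{equation*}
(1-\alpha)(-\Delta + v) = \alpha G_2 - G_1.
\end{equation*}

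The decisive step is to deduce $\alpha = 1$, and I expect it to be the only non-routine ingredient. The right-hand side lies in $\cG$, hence is form-bounded relative to $-\Delta$ with form bound zero: for every $\ep > 0$ there is $c_{\ep}$ such that $|\ps{\phi, (\alpha G_2 - G_1)\phi}| \le \ep \|\na\phi\|^2 + c_{\ep}\|\phi\|^2$. Choosing a sequence $(\phi_n)$ of smooth compactly supported functions localized in a region where $v$ is bounded by some $M$, with $\|\phi_n\| = 1$ and $\|\na\phi_n\| \to +\ii$ (by rescaling a fixed profile), and testing the one-body equation against $\phi_n$, one obtains
\begin{equation*}
|1-\alpha|\bpa{\|\na\phi_n\|^2 - M} \le (|\alpha|+1)\ep\|\na\phi_n\|^2 + (|\alpha|+1)c_{\ep}.
\end{equation*}
Dividing by $\|\na\phi_n\|^2$ and letting $n\to+\ii$ yields $|1-\alpha| \le (|\alpha|+1)\ep$ for every $\ep > 0$, hence $\alpha = 1$, so $T_1 = T_2$; then $G_1 = G_2$ follows from the displayed one-body equation, concluding the grand canonical case.

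The canonical case is handled along the same lines: the restriction to the zero-body sector is unavailable, but one replaces it by testing the $N$-body operator identity on Slater states $\phi\wedge\phi_2\wedge\cdots\wedge\phi_N$ with fixed smooth orbitals $\phi_2,\dots,\phi_N$ localized far from the support of $\phi$ (so that the $w$ interaction terms involving $\phi$ are negligible), which reproduces a one-body equation up to a constant shift $T_1\ln(Z_2/Z_1)$ and gives $G_1 = G_2 + T_1\ln(Z_2/Z_1)$ after repeating the same form-boundedness argument. This is the non-local analogue of the role played by Lemma \ref{lemfree2} in the canonical part of Theorem \ref{hkt}.
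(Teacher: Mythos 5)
Your proposal is correct and follows essentially the same route as the paper: the same Gibbs-variational, uniqueness and exponential-injectivity steps, followed by the observation that an operator with relative form bound zero with respect to $-\Delta$ cannot absorb a nonzero multiple of $-\Delta$ --- which is precisely the content of the paper's Lemma \ref{lemfree}, whose proof likewise reduces the ($N$-body, canonical) identity to a one-body one via Slater determinants and then invokes the bound $G \ge \varepsilon\Delta - c_{\varepsilon}$ to force the coefficient of the Laplacian to vanish. Your concentrating test sequence is simply an explicit rendering of the paper's operator inequality $\tfrac{\alpha}{2}(-\Delta) \le c_{\alpha/2}$, so the two arguments coincide in substance.
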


The proof follows the one of Theorem \ref{hkt} and uses the following lemma.
\begin{lemma}\label{lemfree}
Let $G$ be a self-adjoint operator for which $G \ge \ep \Delta - c_{\ep}$ in the sense of forms in $L^2(B)$ for some ball $B \subset \R^d$, for any $\ep > 0$ and some $c_{\ep} \ge 0$. If 
\begin{align}\label{eqal}
	\sum_{i=1}^N \pa{-\alpha \Delta_i +  G_i} = 0
 \end{align}
	in $L\ind{a}^2(B^N)$ (or in the bosonic counterpart) for some $\alpha \in \R$, then $\alpha = 0$ and $G = 0$ on $B$.
\end{lemma}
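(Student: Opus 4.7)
\smallskip\noindent\textbf{Proof plan for Lemma \ref{lemfree}.} The plan is to reduce the $N$-body form identity to the scalar one-body statement $-\alpha\Delta + G = \lambda\,\mathrm{Id}$, then use the spectrum of $-\Delta$ on $B$ together with the form-bound on $G$ to extract $\alpha=0$ and finally $G=0$.

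First I would set $h \df -\alpha\Delta + G$, with one-body quadratic form $q_h$ defined on the form domain $\cQ$ associated with the chosen boundary condition on $B$ (for instance $H^1_0(B)$ in the Dirichlet case). Testing the hypothesis against a Slater determinant $\phi_1\wedge\cdots\wedge\phi_N$ built from an orthonormal family in $\cQ$ yields $\sum_{i=1}^N q_h(\phi_i)=0$. Fixing $\phi_2,\ldots,\phi_N$ and letting $\phi$ range over unit vectors of $\cQ$ orthogonal to $\mathrm{span}(\phi_2,\ldots,\phi_N)$, the scalar $q_h(\phi)$ must equal the constant $c \df -\sum_{j=2}^N q_h(\phi_j)$; by polarization, $q_h$ restricted to this orthogonal complement $V$ is $c\,\|\cdot\|^2$. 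Varying $(\phi_j)_{j\geq 2}$, any two such $V$, $V'$ have codimension $N-1$ in the infinite-dimensional $\cQ$, so $V\cap V'\neq\{0\}$ and the two constants must agree: there is a single $\lambda\in\R$ with $q_h(\phi)=\lambda\|\phi\|^2$ on all of $\cQ$.

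Next, integrating by parts gives $q_h(\phi)=\alpha\|\nabla\phi\|^2+q_G(\phi)$, so $q_G(\phi)=\lambda\|\phi\|^2-\alpha\|\nabla\phi\|^2$. Combined with the two-sided infinitesimal form-bound $|q_G(\phi)|\le\ep\|\nabla\phi\|^2+c_{\ep}\|\phi\|^2$ (which in the applications of Lemma \ref{lemfree} is available because $G$ arises as a difference of elements of $\cG$), applied to eigenfunctions $e_n$ of $-\Delta$ on $B$ with eigenvalues $\mu_n\to\infty$, one obtains $|\lambda-\alpha\mu_n|\le\ep\mu_n+c_{\ep}$. Dividing by $\mu_n$ and letting $n\to\infty$ yields $|\alpha|\leq\ep$ for every $\ep>0$, hence $\alpha=0$. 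Then $q_G=\lambda\|\cdot\|^2$, meaning $G=\lambda\,\mathrm{Id}$, and plugging back into the original equation gives $N\lambda\,\mathrm{Id}=0$ on $L^2_{\tx{a}}(B^N)$, so $\lambda=0$ and $G=0$.

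The bosonic variant is strictly easier: testing against $\phi^{\otimes N}$ for unit $\phi\in\cQ$ directly gives $Nq_h(\phi)=0$ for every such $\phi$, bypassing the Slater-determinant-intersection step. The main delicate point of the plan is the dimension-counting argument of step one, which requires the one-body form domain $\cQ$ to be infinite-dimensional so that the scalars coming from different orthonormal frames coincide; the Laplacian-spectrum argument then cleanly handles the quantitative separation of $\alpha$ from $G$.
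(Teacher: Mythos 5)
Your proposal is correct and follows essentially the same route as the paper: you test \eqref{eqal} against Slater determinants of orthonormal families in the form domain to conclude that the one-body form of $-\alpha\Delta+G$ is a constant multiple of the norm (the paper equates the diagonal elements by comparing $\wedge_{i=1}^N\phi_i$ with $\wedge_{i=2}^{N+1}\phi_i$ in a single basis, where you use a codimension-intersection argument over varying frames, but both reduce to $-\alpha\Delta+G=\lambda$), and then play the infinitesimal form bound for $G$ against the unbounded spectrum of $-\Delta$ on $B$ to force $\alpha=0$ and hence $G=0$. Your explicit remark that one needs the two-sided bound $\ab{\ps{\phi,G\phi}}\le \ep\norm{\na\phi}^2+c_{\ep}\norm{\phi}^2$ to get $\ab{\alpha}\le\ep$ (the one-sided hypothesis as literally stated only gives $\alpha\le 0$) is a fair precision, and that bound is indeed available in the applications since $G$ there is built from elements of $\cG$.
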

\begin{proof}
We treat the fermionic case, since the bosonic one follows from similar arguments. Let $(\phi_i)_{i \in \N}$ be an orthonormal basis of $L^2(B)$. We apply \eqref{eqal} to $\wedge_{i=1}^N \phi_i$ and take the scalar product with this same vector, which gives 
	\begin{align}\label{mama}
  \sum_{i=1}^N \ps{ \phi_i, \pa{-\alpha \Delta + G} \phi_i} = 0.
 \end{align}
	The same procedure applied to $\wedge_{i=2}^{N+1} \phi_i$ yields $\sum_{i=2}^{N+1} \ps{ \phi_i, \pa{-\alpha \Delta + G} \phi_i} = 0$ and taking the difference between those equations, we get 
 \begin{align*}
		\ps{\phi_1, \pa{-\alpha \Delta + G} \phi_1} = \ps{\phi_{N+1}, \pa{-\alpha \Delta + G} \phi_{N+1}}.
 \end{align*}
 Similarly, we have $\ps{\phi_i, \pa{-\alpha \Delta + G} \phi_i} = \ps{\phi_j, \pa{-\alpha \Delta + G} \phi_j}$ for any $i,j \in \N$, and using \eqref{mama} again, we conclude that $\ps{\phi_i, \pa{-\alpha \Delta + G} \phi_i} = 0$ for any $i \in \N$. By polarization, we deduce that $-\alpha \Delta +G = 0$ and we reduced the problem to the $1$-particle case. With $\ep = \alpha/2$, we have $G = \alpha \Delta \ge \f{\alpha}{2} - c_{\alpha/2}$, and thus $\f{\alpha}{2}(-\Delta) \le c_{\alpha/2}$, which implies $\alpha = 0$ and $G = 0$.
\end{proof}

We refer to \cite{GieRug19} for a review about Matrix DFT at positive temperature.

\section*{Appendix} 

\subsubsection*{Proof of Lemma \ref{lemmas}}
We consider the canonical spin basis $\pa{\ket{p_1,\dots,p_N}}_{p_i \in \acs{\upa,\doa}}$, which we are going to rotate. We define $B_{\perp} \df B_x + i B_y$ and assume that $B_{\perp}(x_i) \neq 0$ for all $i \in \acs{1,\dots,N}$, otherwise the corresponding one-particle operators are already diagonal. For one particle and $B \in \R^3$, we define the rotated orthonormal spin basis
 \begin{align*}
 \ket{\ra} \df \f{\pa{ B_z + \ab{B} } \ket{\upa} + B_{\perp} \ket{\doa}}{\sqrt{ \bpa{ B_z + \ab{B} }^2 + \ab{B_{\perp}}^2}},\bhs \hs\hs\hs\hs\hs\ket{\la} \df \f{\pa{ B_z - \ab{B} } \ket{\upa} + B_{\perp} \ket{\doa}}{\sqrt{ \bpa{ B_z - \ab{B} }^2 + \ab{B_{\perp}}^2}}.
 \end{align*}
The operator $B \cdot \sigma$ is then diagonal on this one-particle basis, $\sigma \cdot B \ket{\ra} = \ab{B} \ket{\ra}$, $\sigma\cdot B \ket{\la} = - \ab{B} \ket{\la}$. Now we work at a fixed $(x_1,\dots,x_N)$ such that all $B(x_i)$ are finite. We define $B_{\ra \upa} \df B_z + \ab{B}$, $B_{\ra \doa} \df B_{\perp} =: B_{\la \doa}$, $B_{\la \upa} \df B_z - \ab{B}$, and for $N$ bodies we define similar rotations, that is for $(s_i)_{1\le i \le N} \in \acs{ \ra, \la}^N$, where $(-1)^{\ra} \df 1$ and $(-1)^{\la} \df -1$,
\begin{align*}
\ket{s_1,\dots,s_N}&  \df \f{\sum_{(p_i)_{1 \le i \le N} \in \acs{\upa,\doa}^N} \prod_{i=1}^N B_{s_i p_i}(x_i) \ket{ p_1,\dots,p_N}} {{\prod_{j=1}^N \sqrt{ \bpa{B_z(x_i) + (-1)^{s_j} \ab{B(x_i)}}^2 + \ab{B_{\perp}(x_i)}^2 }}},
 \end{align*}
which is built from $N$ consecutive one-body rotations. We finally compute
 \begin{align*}
\pa{\sum_{i=1}^N B(x_i)\cdot\sigma_i   } \ket{s_1,\dots,s_N} = \pa{ \sum_{i=1}^N (-1)^{s_i} \ab{B(x_i)}} \ket{s_1,\dots,s_N}. \hs\hs\hs\hs\hs \qed
 \end{align*}

\bibliographystyle{siam}
\bibliography{biblio}

\begin{thebibliography}{10}

\bibitem{Agmon}
{\sc S.~Agmon}, {\em Lectures on exponential decay of solutions of second-order
  elliptic equations}, Princeton University Press, 1982.

\bibitem{AyeFue09}
{\sc P.~W. Ayers and P.~Fuentealba}, {\em Density-functional theory with
  additional basic variables: Extended {L}egendre transform}, Phys. Rev. A, 80
  (2009), p.~032510.

\bibitem{AyeLev05}
{\sc P.~W. Ayers and M.~Levy}, {\em Generalized density-functional theory:
  Conquering the {N}-representability problem with exact functionals for the
  electron pair density and the second-order reduced density matrix}, J. Chem.
  Sci., 117 (2005), pp.~507--514.

\bibitem{AyeNag07}
{\sc P.~W. Ayers and A.~Nagy}, {\em Alternatives to the electron density for
  describing {C}oulomb systems}, J. Chem. Phys, 126 (2007), p.~144108.

\bibitem{BalCanGro15}
{\sc T.~Baldsiefen, A.~Cangi, and E.~Gross}, {\em
  Reduced-density-matrix-functional theory at finite temperature: Theoretical
  foundations}, Phys. Rev. A, 92 (2015), p.~052514.

\bibitem{BloPruPot13}
{\sc P.~E. Bl{\"o}chl, T.~Pruschke, and M.~Potthoff}, {\em Density-matrix
  functionals from {G}reen's functions}, Phys. Rev. B, 88 (2013), p.~205139.

\bibitem{Burke12}
{\sc K.~Burke}, {\em Perspective on density functional theory}, J. Chem. Phys,
  136 (2012), p.~150901.

\bibitem{BurSmiGra16}
{\sc K.~{Burke}, J.~C. {Smith}, P.~E. {Grabowski}, and A.~{Pribram-Jones}},
  {\em {Exact conditions on the temperature dependence of density
  functionals}}, Phys. Rev. B, 93 (2016), p.~195132.

\bibitem{CapVig01}
{\sc K.~Capelle and G.~Vignale}, {\em Nonuniqueness of the potentials of
  {S}pin-{D}ensity-functional theory}, Phys. Rev. Lett, 86 (2001), p.~5546.

\bibitem{CapVig02}
{\sc K.~{Capelle} and G.~{Vignale}}, {\em {Nonuniqueness and derivative
  discontinuities in {D}ensity-functional theories for current-carrying and
  superconducting systems}}, Phys. Rev. B, 65 (2002), p.~113106.

\bibitem{CheFri15}
{\sc H.~Chen and G.~Friesecke}, {\em Pair densities in density functional
  theory}, Multiscale Model. Sim., 13 (2015), pp.~1259--1289.

\bibitem{Dharma16}
{\sc M.~Dharma-Wardana}, {\em Current issues in finite-{T} density-functional
  theory and warm-correlated matter}, Computation, 4 (2016), p.~16.

\bibitem{DonPar78}
{\sc R.~A. Donnelly and R.~G. Parr}, {\em Elementary properties of an energy
  functional of the first-order reduced density matrix}, J. Chem. Phys, 69
  (1978), pp.~4431--4439.

\bibitem{EscPic01}
{\sc H.~{Eschrig} and W.~E. {Pickett}}, {\em {Density functional theory of
  magnetic systems revisited}}, Solid State Commun., 118 (2001), pp.~123--127.

\bibitem{FouHofOst02c}
{\sc S.~Fournais, M.~Hoffmann-Ostenhof, T.~Hoffmann-Ostenhof, and T.~{\O}.
  S{\o}rensen}, {\em Analyticity of the density of electronic wavefunctions},
  Ark. Mat., 42 (2004), pp.~87--106.

\bibitem{FouHofOst02a}
{\sc S.~Fournais, M.~{Hoffmann-Ostenhof}, and T.~{{\O}stergaard S{\o}rensen}},
  {\em The electron density is smooth away from the nuclei}, Commun. Math.
  Phys, 228 (2002), pp.~401--415.

\bibitem{Friesecke03b}
{\sc G.~Friesecke}, {\em On the infinitude of non-zero eigenvalues of the
  single-electron density matrix for atoms and molecules}, R. Soc. Lond. Proc.
  Ser. A Math. Phys. Eng. Sci., 459 (2003), pp.~47--52.

\bibitem{Furche04}
{\sc F.~Furche}, {\em Towards a practical pair density-functional theory for
  many-electron systems}, Phys. Rev. A, 70 (2004), p.~022514.

\bibitem{Garrigue18}
{\sc L.~Garrigue}, {\em Unique continuation for many-body {S}chr{\"o}dinger
  operators and the {H}ohenberg-{K}ohn theorem}, Math. Phys. Anal. Geom, 21
  (2018), p.~27.

\bibitem{Garrigue19}
\leavevmode\vrule height 2pt depth -1.6pt width 23pt, {\em {Unique continuation
  for many-body {S}chr{\"o}dinger operators and the {H}ohenberg-{K}ohn theorem.
  II. The Pauli Hamiltonian}}, Doc. Math,  (2020).

\bibitem{GieRug19}
{\sc K.~J. Giesbertz and M.~Ruggenthaler}, {\em One-body reduced density-matrix
  functional theory in finite basis sets at elevated temperatures}, Phys. Rep,
  (2019).

\bibitem{Gilbert75}
{\sc T.~L. Gilbert}, {\em Hohenberg-{K}ohn theorem for nonlocal external
  potentials}, Phys. Rev. B, 12 (1975), pp.~2111--2120.

\bibitem{GonSchVan96}
{\sc A.~Gonis, T.~Schulthess, J.~Van~Ek, and P.~Turchi}, {\em A general minimum
  principle for correlated densities in quantum many-particle systems}, Phys.
  Rev. Lett, 77 (1996), p.~2981.

\bibitem{GraDesMic14}
{\sc F.~Graziani, M.~P. Desjarlais, R.~Redmer, and S.~B. Trickey}, {\em
  Frontiers and challenges in warm dense matter}, vol.~96, Springer Science \&
  Business, 2014.

\bibitem{GunLun76}
{\sc O.~Gunnarsson and B.~I. Lundqvist}, {\em Exchange and correlation in
  atoms, molecules, and solids by the spin-density-functional formalism}, Phys.
  Rev. B, 13 (1976), pp.~4274--4298.

\bibitem{Henderson74}
{\sc R.~Henderson}, {\em A uniqueness theorem for fluid pair correlation
  functions}, Phys. Lett. A, 49 (1974), pp.~197--198.

\bibitem{HigHig07}
{\sc M.~Higuchi and K.~Higuchi}, {\em Pair density-functional theory by means
  of the correlated wave function}, Phys. Rev. A, 75 (2007), p.~042510.

\bibitem{Hislop00}
{\sc P.~Hislop}, {\em Exponential decay of two-body eigenfunctions: A review},
  in Proceedings of the Symposium on Mathematical Physics and Quantum Field
  Theory (Berkeley, CA, 1999), vol.~4, 2000, pp.~265--288.

\bibitem{HohKoh64}
{\sc P.~Hohenberg and W.~Kohn}, {\em Inhomogeneous electron gas}, Phys. Rev,
  136 (1964), pp.~B864--B871.

\bibitem{Jecko10}
{\sc T.~{Jecko}}, {\em {A New Proof of the Analyticity of the Electronic
  Density of Molecules}}, Lett. Math. Phys, 93 (2010), pp.~73--83.

\bibitem{Jones15}
{\sc R.~O. Jones}, {\em Density functional theory: Its origins, rise to
  prominence, and future}, Rev. Mod. Phys, 87 (2015), p.~897.

\bibitem{KohSavUll05}
{\sc W.~Kohn, A.~Savin, and C.~A. Ullrich}, {\em Hohenberg--{K}ohn theory
  including spin magnetism and magnetic fields}, Int. J. Quantum Chem, 101
  (2005), pp.~20--21.

\bibitem{KohSha65}
{\sc W.~Kohn and L.~J. Sham}, {\em Self-consistent equations including exchange
  and correlation effects}, Phys. Rev. (2), 140 (1965), pp.~A1133--A1138.

\bibitem{LaeBen14}
{\sc A.~Laestadius and M.~Benedicks}, {\em Hohenberg--{K}ohn theorems in the
  presence of magnetic field}, Int. J. Quantum Chem, 114 (2014), pp.~782--795.

\bibitem{Levy79}
{\sc M.~Levy}, {\em Universal variational functionals of electron densities,
  first-order density matrices, and natural spin-orbitals and solution of the
  $v$-representability problem}, Proc. Natl. Acad. Sci. USA, 76 (1979),
  pp.~6062--6065.

\bibitem{LevZie01}
{\sc M.~Levy and P.~Ziesche}, {\em The pair density functional of the kinetic
  energy and its simple scaling property}, J. Chem. Phys, 115 (2001),
  pp.~9110--9112.

\bibitem{Lieb63b}
{\sc E.~H. Lieb}, {\em Simplified approach to the ground-state energy of an
  imperfect {B}ose gas}, Phys. Rev., 130 (1963), pp.~2518--2528.

\bibitem{Lieb83b}
\leavevmode\vrule height 2pt depth -1.6pt width 23pt, {\em Density functionals
  for {C}oulomb systems}, Int. J. Quantum Chem, 24 (1983), pp.~243--277.

\bibitem{MarLat08}
{\sc M.~A. Marques and N.~Lathiotakis}, {\em Empirical functionals for
  reduced-density-matrix-functional theory}, Phys. Rev. A, 77 (2008),
  p.~032509.

\bibitem{Mazziotti02}
{\sc D.~A. {Mazziotti}}, {\em {Variational minimization of atomic and molecular
  ground-state energies via the two-particle reduced density matrix}}, Phys.
  Rev. A, 65 (2002), p.~062511.

\bibitem{Mazziotti05}
\leavevmode\vrule height 2pt depth -1.6pt width 23pt, {\em {Variational
  two-electron reduced density matrix theory for many-electron atoms and
  molecules: Implementation of the spin- and symmetry-adapted $T_{2}$ condition
  through first-order semidefinite programming}}, Phys. Rev. A, 72 (2005),
  p.~032510.

\bibitem{Mazziotti06}
\leavevmode\vrule height 2pt depth -1.6pt width 23pt, {\em {Variational
  reduced-density-matrix method using three-particle N-representability
  conditions with application to many-electron molecules}}, Phys. Rev. A, 74
  (2006), p.~032501.

\bibitem{Mazziotti07}
\leavevmode\vrule height 2pt depth -1.6pt width 23pt, {\em {Two-electron
  reduced density matrices from the anti-Hermitian contracted Schr{\"o}dinger
  equation: Enhanced energies and properties with larger basis sets}}, J. Chem.
  Phys, 126 (2007), p.~184101.

\bibitem{Mazziotti16}
\leavevmode\vrule height 2pt depth -1.6pt width 23pt, {\em
  Pure-{N}-representability conditions of two-fermion reduced density
  matrices}, Phys. Rev. A, 94 (2016), p.~032516.

\bibitem{Mermin65}
{\sc N.~D. Mermin}, {\em Thermal properties of the inhomogeneous electron gas},
  Phys. Rev, 137 (1965), p.~A1441.

\bibitem{MerKat77}
{\sc E.~Meron and J.~Katriel}, {\em A {H}ohenberg-{K}ohn theorem for non-local
  potentials}, Phys. Lett. A, 61 (1977), pp.~19--21.

\bibitem{Mezey99}
{\sc P.~G. Mezey}, {\em The holographic electron density theorem and quantum
  similarity measures}, Mol. Phys, 96 (1999), pp.~169--178.

\bibitem{Nagy02}
{\sc A.~Nagy}, {\em Density-matrix functional theory}, Phys. Rev. A, 66 (2002),
  p.~022505.

\bibitem{Nagy03}
\leavevmode\vrule height 2pt depth -1.6pt width 23pt, {\em Pair density
  functional theory}, in The Fundamentals of Electron Density, Density Matrix
  and Density Functional Theory in Atoms, Molecules and the Solid State,
  Springer, 2003, pp.~79--87.

\bibitem{NagAmo04}
{\sc A.~Nagy and C.~Amovilli}, {\em Effective potential in density matrix
  functional theory}, J. Chem. Phys, 121 (2004), pp.~6640--6648.

\bibitem{PanSah12}
{\sc X.-Y. Pan and V.~Sahni}, {\em Generalization of the {H}ohenberg-{K}ohn
  theorem to the presence of a magnetostatic field}, J. Phys. Chem. Solids, 73
  (2012), pp.~630--634.

\bibitem{PanSah15}
\leavevmode\vrule height 2pt depth -1.6pt width 23pt, {\em Hohenberg-{K}ohn
  theorems in electrostatic and uniform magnetostatic fields}, J. Chem. Phys,
  143 (2015), p.~174105.

\bibitem{PanRaj72}
{\sc M.~Pant and A.~Rajagopal}, {\em Theory of inhomogeneous magnetic electron
  gas}, Solid State Commun., 10 (1972), pp.~1157--1160.

\bibitem{PinBokLud07}
{\sc R.~Pino, O.~Bokanowski, E.~V. Lude{\~n}a, and R.~L. Boada}, {\em A
  re-statement of the {H}ohenberg--{K}ohn theorem and its extension to finite
  subspaces}, Theor. Chem. Acc, 118 (2007), pp.~557--561.

\bibitem{PitProFlo11}
{\sc S.~Pittalis, C.~Proetto, A.~Floris, A.~Sanna, C.~Bersier, K.~Burke, and
  E.~K. Gross}, {\em Exact conditions in finite-temperature density-functional
  theory}, Phys. Rev. Lett, 107 (2011), p.~163001.

\bibitem{PriPitGro14}
{\sc A.~Pribram-Jones, S.~Pittalis, E.~Gross, and K.~Burke}, {\em Thermal
  density functional theory in context}, in Frontiers and Challenges in Warm
  Dense Matter, Springer, 2014, pp.~25--60.

\bibitem{Rajagopal80}
{\sc A.~Rajagopal}, {\em Theory of inhomogeneous electron systems:
  Spin-density-functional formalism}, Adv. Chem. Phys,  (1980), pp.~59--193.

\bibitem{RajCal73}
{\sc A.~Rajagopal and J.~Callaway}, {\em Inhomogeneous electron gas}, Phys.
  Rev. B, 7 (1973), p.~1912.

\bibitem{ReiBorTel17}
{\sc S.~Reimann, A.~Borgoo, E.~I. Tellgren, A.~M. Teale, and T.~Helgaker}, {\em
  Magnetic-field density-functional theory ({BDFT}): lessons from the adiabatic
  connection}, J. Chem. Theory Comput., 13 (2017), pp.~4089--4100.

\bibitem{Ruelle}
{\sc D.~Ruelle}, {\em {Statistical mechanics. Rigorous results}}, {Singapore:
  World Scientific. London: Imperial College Press }, 1999.

\bibitem{Ruggenthaler15}
{\sc M.~{Ruggenthaler}}, {\em {Ground-State Quantum-Electrodynamical
  Density-Functional Theory}}, ArXiv e-prints,  (2015).

\bibitem{RuizSky13}
{\sc {\'A}.~Ruiz-Serrano and C.-K. Skylaris}, {\em A variational method for
  density functional theory calculations on metallic systems with thousands of
  atoms}, J. Chem. Phys, 139 (2013), p.~054107.

\bibitem{SchKamBlo17}
{\sc R.~Schade, E.~Kamil, and P.~Bl{\"o}chl}, {\em Reduced density-matrix
  functionals from many-particle theory}, Eur. Phys. J., 226 (2017),
  pp.~2677--2692.

\bibitem{ShaDewLat08}
{\sc S.~Sharma, J.~K. Dewhurst, N.~N. Lathiotakis, and E.~K.~U. Gross}, {\em
  Reduced density matrix functional for many-electron systems}, Phys. Rev. B,
  78 (2008), p.~201103.

\bibitem{SieMul81}
{\sc H.~Siedentop and A.~M{\"u}ller}, {\em On the relation between k-body
  interaction and k-particle density}, Phys. Lett. A, 81 (1981), pp.~5--8.

\bibitem{Simon82}
{\sc B.~Simon}, {\em Schr{\"o}dinger semigroups}, Bull. Amer. Math. Soc.
  (N.S.), 7 (1982), pp.~447--526.

\bibitem{SmiSagBur18}
{\sc J.~C. Smith, F.~Sagredo, and K.~Burke}, {\em Warming up density functional
  theory}, in Frontiers of Quantum Chemistry, Springer, 2018, pp.~249--271.

\bibitem{Tellgren18}
{\sc E.~I. {Tellgren}}, {\em {Density-functional theory for internal magnetic
  fields}}, Phys. Rev. A, 97 (2018), p.~012504.

\bibitem{Valone80}
{\sc S.~M. Valone}, {\em A one-to-one mapping between one-particle densities
  and some {N}-particle ensembles}, J. Chem. Phys, 73 (1980), pp.~4653--4655.

\bibitem{VigRas87}
{\sc G.~Vignale and M.~Rasolt}, {\em Density-functional theory in strong
  magnetic fields}, Phys. Rev. Lett, 59 (1987), p.~2360.

\bibitem{BarHed72}
{\sc U.~{von Barth} and L.~{Hedin}}, {\em {A local exchange-correlation
  potential for the spin polarized case. i}}, J. Phys. C, 5 (1972),
  pp.~1629--1642.

\bibitem{YanZhaAye00}
{\sc W.~Yang, Y.~Zhang, and P.~W. Ayers}, {\em Degenerate ground states and a
  fractional number of electrons in density and reduced density matrix
  functional theory}, Phys. Rev. Lett, 84 (2000), pp.~5172--5175.

\bibitem{Ziesche94}
{\sc P.~Ziesche}, {\em Pair density functional theory - a generalized density
  functional theory}, Phys. Lett. A, 195 (1994), pp.~213--220.

\bibitem{Ziesche96}
\leavevmode\vrule height 2pt depth -1.6pt width 23pt, {\em Attempts toward a
  pair density functional theory}, Int. J. Quantum Chem, 60 (1996),
  pp.~1361--1374.

\end{thebibliography}
\end{document}